\newcommand{\val}{\mbox{Val}}
\newcommand{\M}{\text{\sc max}}
\newcommand{\m}{\text{\sc min}}
\newcommand\comments[1]{}
\DeclareMathOperator*{\argmax}{arg\,max}
\title{A Recursive Algorithm for Solving Simple Stochastic Games} 
\author{Xavier Badin de Montjoye}{Université Paris Saclay, UVSQ, DAVID}{xavier.badin-de-montjoye2@uvsq.fr}{}{}
\authorrunning{X. Badin De Montjoye} 
\keywords{Simple Stochastic Games, Strategy Improvement, Parametrized Complexity, Recursif algorithm} 
\begin{document}

\maketitle

\begin{abstract}
    We present two recursive strategy improvement algorithms for solving simple stochastic games. First we present an algorithm for solving SSGs of degree $d$ that uses at most $O\left(\left\lfloor(d+1)^2/2\right\rfloor^{n/2}\right)$ iterations, with $n$ the number of MAX vertices. Then, we focus on binary SSG and propose an algorithm that has complexity $O\left(\varphi^nPoly(N)\right)$ where $\varphi = (1 + \sqrt{5})/2$ is the golden ratio. To the best of our knowledge, this is the first deterministic strategy improvement algorithm that visits $2^{cn}$ strategies with $c < 1$.
\end{abstract}


\section{Introduction}

Simple stochastic games (SSG) are a restriction  introduced by Condon~\cite{condon1990algorithms,condon1992complexity} of the notion of stochastic games defined by Shapley~\cite{Shapley1095}. An SSG is a turn-based zero-sum game with perfect information played by two players named \M\ and \m. A token is placed on a directed graph and moves alongside the graph arcs. The set of vertices is partitioned in \M, \m, Random and Sink vertices. If the token is in a \M\ or \m\ vertices, the corresponding player chooses its next position in the outneighbourhood of the currently occupied vertex. In the case of random vertices, the token is moved randomly. Finally, when a sink vertex $s$ is reached, the game ends and player \m\ must pay some penalty value $\val(s)$ to player \M. The goal of \m\ is to minimise the expected penalty, and the goal of \M\ is to maximise it. One reason to study SSGs and their related complexity is that the stochastic versions of classical games such as parity games, mean and discounted payoff are all equivalent to SSG~\cite{Anderson2009}.

Simple stochastic games admit a pair of optimal strategies~\cite{condon1992complexity} whose expected value satisfies a Nash equilibrium. Our goal is to compute such a pair of optimal strategy. This problem is known to be in $\textsc{PPAD}$~\cite{juba2005hardness} a subset of the class $\textsc{FNP}$, but is not known to be in $\textsc{FP}$. SSG have several applications, such as modelling autonomous urban driving~\cite{Chen2013urbandriving} or in the domain of model checking of modal $\mu$-calculus~\cite{stirling1999bisimulation}.

We note $n$ the number of \M\ vertices, $N$ the number of total vertices and $r$ the number of random vertices. Every known algorithm in the literature for solving SSGs has exponential time complexity bounds. There exist several FPT algorithms that can be used for specific families of SSGs. For instance, for SSGs with random vertices of degree $2$ and uniform probability distribution, Ibsen-Jensen and Milterson gives a value iteration algorithm in~\cite{ibsen2012solving} with complexity $O\left(r2^{r}(r\log(r)+N)\right)$.

One of the main families of algorithms for solving SSGs are strategy improvement algorithms. In the case of SSG with \M\ vertices of outdegree exactly $2$, Tripathi, Valkanova and Kumar present in~\cite{tripathi2011strategy} an algorithm with $O\left(2^{n}/n\right)$ iterations, which is the only current bounds that deterministically improve the trivial $2^n$ iterations of checking every possible strategies of player \M. Ludwig offers a randomised algorithm in~\cite{ludwig1995subexponential}, which does $\displaystyle 2^{O\left(\sqrt{n}\right)}$ iterations on average. In this paper, we provide a deterministic algorithm for the same family of SSGs that has complexity $O\left(2^{cn}Poly(N)\right)$ for some $c<1$. Moreover, we will provide the first deterministic algorithm of this family with parametrised complexity in the degree $d$ of \M\ vertices and number of \M\ vertices, improving the trivial bound of $O\left(d^nPoly(N)\right)$ to $O\left(((d+1)/\sqrt{2})^{n}Poly(N)\right)$.

In the general case, Gimbert and Horn give in~\cite{gimbert2008simple} a strategy improvement algorithm whose complexity is a function of the random vertices with an algorithm of complexity $\displaystyle O\left(r!Poly(N)\right)$. Moreover, in~\cite{auger2019solving}, Auger Coucheney and Strozecki present a stochastic algorithm that runs in $2^{O(r)}$.

Auger, Badin de Montjoye and Strozecki present in~\cite{auger2021} a general formulation for strategy improvement algorithms that offers a general bound on the complexity of all such algorithms depending on the format of the probability value of the random nodes. It states that if there is some $q$ such that all probabilities are of the format $p/q$ and if each iteration of the algorithm is done in polynomial time, then any strategy improvement algorithm runs in $O\left(nq^rPoly(N)\right)$.

\subsection*{Contributions}

In this paper, we focus on SSGs whose \M\ vertices have outdegree $d$ with no constraints on the probability distribution of random vertices. We introduce two new recursive algorithms to solve SSGs. The first one, in Section~\ref{sec:rec2}, fixes the strategies on two vertices and recursively solves the rest of the SSG. This algorithm has bound $O\left(\left(\left\lfloor\frac{(d+1)^2}{2}\right\rfloor-1\right)^{n/2}Poly(|G|)\right)$. The second algorithm presented in Section~\ref{sec:recT} works only for SSG of degree $2$. However, it achieves a better bound that the $O\left(\sqrt{3}^nPoly(N)\right)$ of the first algorithm by reaching $O\left(\varphi^nPoly(N)\right)$ with $\varphi= (1+\sqrt{5})/2$ the golden ratio, which is, to the best of our knowledge, the best complexity for deterministic algorithm on this family of SSGs. Moreover, our algorithm does not require SSGs to be stopping, a common technical hypothesis, which may require a squarring of the number of vertices to be met.

\section{An Overview of Simple Stochastic Games}

    Simple Stochastic Games where introduce by Anne Condon in~\cite{condon1990algorithms}. We give a definition close to the one given in~\cite{tripathi2011strategy,auger2021}.
    
    \begin{definition}
        A Simple Stochastic Game (SSG) is a directed graph $G=(V,E)$ with a partition of the vertex set $V$ in $V_{\M}$, $V_{\m}$, $V_{R}$ and $V_{S}$ respectively called, \M, \m, random and sink vertices such that:
        \begin{itemize}
            \item every vertex of $V_{\M}$ and $V_{\m}$ has outdegree at least two.
            \item every vertex $x$ of $V_{R}$ has outdegree at least one, and an associated rational probability distribution $p_x(\cdot)$ on the outneighbourhood of $x$.
            \item every vertex $x \in V_S$, there is an associated rational value $\val(x)$ in the closed interval $[0,1]$.
        \end{itemize}
    \end{definition}
    
    \begin{definition}
        A binary SSG is an SSG where every vertex of $V_{\M}$ has outdegree $2$. An SSG is of degree $d$ if its \M\ vertices are of degree at most $d$.
    \end{definition}
    
    In this article we denote $|V_{\M}|$ by $n$. We will write $|G|$ the size of the representations of the game $G$ in bits. We present an instance of an SSG in Figure~\ref{fig:def_SSG}.
    
    \begin{figure}
\centering

    \begin{tikzpicture}
    
        \node[draw, circle] (x1) at (0,4) {$x_1$};
		\node[draw, circle, fill=gray!50] (1) at (0.5,1) {$1$};
		\node[draw, rectangle, minimum height=0.8cm, minimum width=0.8cm] (n1) at (2.5,2) {$n_1$};
		\node[draw, diamond, aspect=1] (r1) at (1.5,3) {$r_1$};
		\node[draw, rectangle, minimum height=0.8cm, minimum width=0.8cm] (n2) at (1.5,5.5) {$n_2$};
		\node[draw, circle] (x2) at (2.5,4) {$x_2$};
		\node[draw, rectangle, minimum height=0.8cm, minimum width=0.8cm] (n3) at (3,7) {$n_3$};
		\node[draw, diamond, aspect=1] (r2) at (4,2.5) {$r_2$};
		\node[draw, circle] (x3) at (4,5.5) {$x_3$};
		\node[draw, circle] (x4) at (5.5,4) {$x_4$};
		\node[draw, diamond, aspect=1] (r3) at (6,1) {$r_3$};
		\node[draw, circle, fill=gray!50] (0) at (6,7) {$0$};

		\draw[->,>=latex] (x1)--(1);
		\draw[->,>=latex] (x1) to[bend right = 20] (n2);
		\draw[->,>=latex] (n1)--(1);
		\draw[->,>=latex] (n1)--(r1);
		\draw[->,>=latex] (r1)--(1);
		\draw[->,>=latex] (r1)--(r2);
		\draw[->,>=latex] (r1)--(x2);
		\draw[->,>=latex] (n2) to[bend right = 20] (x1);
		\draw[->,>=latex] (n2)--(n3);
		\draw[->,>=latex] (x2)--(n2);
		\draw[->,>=latex] (x2)--(r2);
		\draw[->,>=latex] (n3)--(0);
		\draw[->,>=latex] (n3)--(x3);
		\draw[->,>=latex] (r2)--(n1);
		\draw[->,>=latex] (r2)--(x3);
		\draw[->,>=latex] (r2) to[bend right = 20] (x4);
		\draw[->,>=latex] (r2) to[bend right = 20] (r3);
		\draw[->,>=latex] (x3)--(0);
		\draw[->,>=latex] (x3)--(x2);
		\draw[->,>=latex] (r3)--(1);
		\draw[->,>=latex] (r3) to[bend right = 20] (r2);
		\draw[->,>=latex] (r3)--(x4);
		\draw[->,>=latex] (x4)--(x3);
		\draw[->,>=latex] (x4) to[bend right = 20] (r2);

    \end{tikzpicture}

\caption{Instance of an SSG where square, circle and diamond vertices are respectively \M, \m\ and random vertices and grey vertices are sinl vertices. The probability distributions of the random vertices is the uniform distribution over their children.}
\label{fig:def_SSG}
\end{figure}
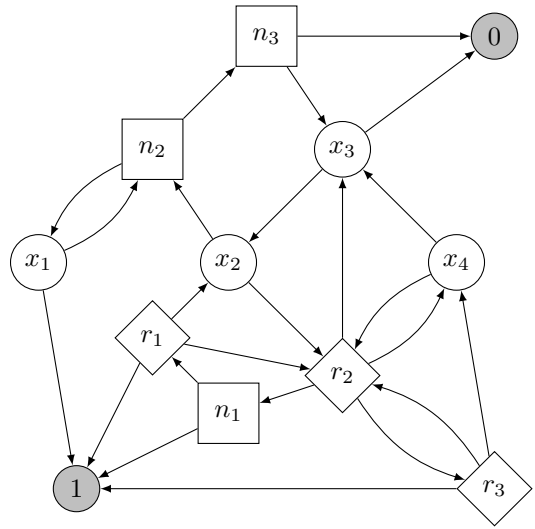
    
    The game is played by two players named \M\ and \m. The game starts by placing a token on some initial vertex $x_0$. Then, the token is moved according to the following rule. If the token is in a \M\ or a \m\ vertex $x$, then the corresponding player moves the token according to an outgoing arc from $x$. If the token is in a random vertex $x$, then the token is moved according to the probability distribution $p_x$. When the token reaches a sink $s$, then player \m\ has to pay player \M\ the value $\val(x)$. Informally, the goal of the game for \M\ is to maximise the value of the final sink and, conversely, the goal of the game for \m\ is to minimise it.
    
    This game is turn-based, with perfect information. The strategies consider by both players should thus be deterministic and only relies on the current position of the token. This is a well-known result on simple stochastic games and a proof of this can be found in~\cite{condon1990algorithms,tripathi2011strategy}. We thus only consider positional strategies.
    
    \begin{definition}
        A positional \M\ strategy is a function $\sigma$ from $V_{\M}$ to $V$ such that for all $x$ in $V_{\M}$, $(x,\sigma(x))$ is an arc of $G$.
        
        A positional \m\ strategy is a function $\tau$ from $V_{\m}$ to $V$ such that for all $x$ in $V_{\m}$, $(x,\tau(x))$ is an arc of $G$.
    \end{definition}
    
    In this paper when we talk about a pair of strategies $(\sigma,\tau)$, $\sigma$ is a positional \M\ strategy and $\tau$ is a positional \m\ strategy. We can now define the value vector of a pair of strategy.
    
    \begin{definition}
        For $(\sigma,\tau)$ a pair of strategies, and $x_0$ a vertex of $V$, the value $v_{\sigma,\tau}(x_0)$ is the expected gain for \M\ if both players play according to $\sigma$ and $\tau$. In other words:
        \[v_{\sigma,\tau}(x_0) = \sum\limits_{s \in V_S} \mathds{P}_{\sigma,\tau}(x_0 \longrightarrow s)\val(s)\]
        where $\mathds{P}_{\sigma,\tau}(x_0 \longrightarrow s)$ is the probability that the game ends in $s$ while starting in $x_0$ and such that when the token is in a \M\ vertex (resp. \m\ vertex) $x$ it moves to $\sigma(x)$ (resp. $\tau(x)$). 
    \end{definition}
    
    The value vector $v_{\sigma,\tau}$ is the vector $\left(v_{\sigma,\tau}(x)\right)_{x \in V}$. We compare value vectors according to the pointwise order. For two value vector $v$ and $v'$, $v > v'$ if, for all $x$, $v(x) \geq v'(x)$ and there is some $y \in V$ such that $v(y) > v'(y)$. As usual, $v \geq v'$ if $v > v'$ or if $v = v'$.
    
    Computing the value vector of a pair of strategies is equivalent to solving a Markov chain which can be done in time polynomial in $|G|$.
    
    For a \M\ strategy $\sigma$ we say that a \m\ strategy $\tau$ is a best response for $\sigma$ if and only if, for every \m\ strategy $\tau'$, we have $v_{\sigma,\tau} \leq v_{\sigma,\tau'}$. It is also well known that best response exists.
    
    \begin{proposition}[\cite{condon1990algorithms}]
        For every positional \M\ strategy $\sigma$, there exists a positional \m\ strategy $\tau$ that is a best response for $\sigma$. Moreover, a best response can be computed in polynomial time by linear programming.
    \end{proposition}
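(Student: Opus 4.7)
The plan is to fix the \M\ strategy $\sigma$ and reduce the problem to a single-player minimisation over the residual Markov decision process. Once $\sigma$ is fixed, every \M\ vertex $x$ behaves like a deterministic node transitioning to $\sigma(x)$, so the remainder of the game is an MDP controlled solely by \m, and the claim amounts to a standard MDP optimality result together with the classical LP characterisation of optimal values.

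For existence, I would set $v^*(x) := \min_{\tau} v_{\sigma,\tau}(x)$, the minimum being taken over positional \m\ strategies and attained since the set is finite. The substantive step is to show that a single strategy $\tau^*$ achieves this minimum at every starting vertex simultaneously. I would do so by verifying the Bellman system
\[
v^*(s) = \val(s), \quad v^*(x) = v^*(\sigma(x)), \quad v^*(x) = \sum_{y} p_x(y) v^*(y), \quad v^*(x) = \min_{y \in N^+(x)} v^*(y),
\]
according as $x$ is a sink, an \M, a random, or an \m\ vertex. The first three equalities are immediate from the definition of $v_{\sigma,\tau}$; the equality at \m\ vertices requires a switching lemma, which I would establish by a first-step analysis comparing the Markov chains induced by a strategy $\tau$ and by a local modification at a single \m\ vertex, observing that a locally improving switch produces a pointwise non-worse value vector. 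Setting $\tau^*(x) \in \argmin_{y \in N^+(x)} v^*(y)$ at each \m\ vertex, the linear system defining $v_{\sigma,\tau^*}$ then coincides with the Bellman system satisfied by $v^*$, so $v_{\sigma,\tau^*} = v^*$ pointwise and $\tau^*$ is a best response.

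For the polynomial-time computation, I would use the linear program
\begin{align*}
\text{maximise} \quad & \sum_{x \in V} v_x \\
\text{subject to} \quad & v_s = \val(s) \ \text{for } s \in V_S, \quad v_x = v_{\sigma(x)} \ \text{for } x \in V_\M, \\
& v_x = \sum_{y} p_x(y) v_y \ \text{for } x \in V_R, \\
& v_x \le v_y \ \text{for } x \in V_\m,\ y \in N^+(x), \quad 0 \le v_x \le 1.
\end{align*}
The vector $v^*$ is feasible, and a reverse Bellman argument shows that any feasible $v$ is dominated componentwise by $v^*$, so $v^*$ is the unique maximiser. Any polynomial-time LP algorithm, such as the ellipsoid method or an interior-point scheme, returns $v^*$ in time polynomial in $|G|$, and $\tau^*$ is extracted by selecting at each \m\ vertex a successor realising a tight inequality.

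The main obstacle is the switching lemma underpinning the Bellman equality at \m\ vertices: it must be shown that switching $\tau(x)$ to a vertex minimising $v^*$ among $N^+(x)$ cannot strictly worsen \emph{any} entry of the value vector. Once this is in place, both the existence claim and the correctness of the LP follow. A minor additional point is the bound $v_x \le 1$, which is needed to keep the LP bounded in the non-stopping case; it is harmless since all true values lie in $[0,1]$ by the sink-value assumption.
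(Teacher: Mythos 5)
The paper gives no proof of this proposition --- it is imported from Condon with a citation --- so your attempt can only be judged on its own terms. Your overall plan (fix $\sigma$, view the residue as a one-player minimisation, establish Bellman equations, solve by LP) is the standard and correct one, but the way you treat non-stopping games breaks down, and this paper explicitly refuses to assume the stopping property. Since $v_{\sigma,\tau}(x)=\sum_{s\in V_S}\mathds{P}_{\sigma,\tau}(x\to s)\val(s)$ assigns value $0$ to plays that never reach a sink, non-termination is the \emph{best} outcome for \m, and the best-response vector $v^*$ is the \emph{least} nonnegative fixed point of your Bellman system, not the greatest. Your LP maximises $\sum_x v_x$ over pre-fixed points and therefore returns the greatest fixed point in $[0,1]^V$. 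Concretely, take two \m\ vertices $x$ and $y$, each with one arc to the other and one arc to a sink of value $1$, and no other vertices: the best response loops forever, so $v^*(x)=v^*(y)=0$, yet $v_x=v_y=1$ is feasible for your LP and maximises the objective. Hence the claim that ``any feasible $v$ is dominated componentwise by $v^*$'' is false, and the cap $v_x\le 1$ is not the harmless technicality you describe --- it is exactly what the LP saturates on sink-avoiding cycles.

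The same issue infects the existence half: from ``the linear system defining $v_{\sigma,\tau^*}$ coincides with the Bellman system satisfied by $v^*$'' you conclude $v_{\sigma,\tau^*}=v^*$, but that inference needs the linear system of the induced Markov chain to have a unique solution, which fails precisely when that chain has a recurrent class containing no sink. Two repairs are available: (i) add the stopping hypothesis, under which the Markov-chain system is nonsingular and every feasible point of your LP really is dominated by $v^*$ --- this is Condon's original setting but not the one this paper works in; or (ii) first compute, by a polynomial-time graph algorithm, the set $Z$ of vertices from which \m\ can keep the token away from $V_S$ forever, fix $v=0$ on $Z$, and only then run your maximisation LP on the remaining vertices; with that preprocessing the ambiguity between least and greatest fixed points disappears, and $\tau^*$ can be read off the tight constraints as you propose. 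The switching lemma you defer is the routine part and your first-step sketch would go through; the substantive gap is the fixed-point selection, not that lemma.
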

    
    In the same way, we can define the best response for a \m\ strategy.
    
    \begin{definition}
        For every \M\ strategy $\sigma$, we write $v_{\sigma} = v_{\sigma,\tau(\sigma)}$ where $\tau(\sigma)$ is a best response to $\sigma$.
    \end{definition}
    
    We say that $\sigma$ is better than $\sigma'$ or has greater value, or we note $\sigma > \sigma'$ if $v_{\sigma} > v_{\sigma'}$.
    
    It is well known in the literature (\cite{condon1990algorithms,tripathi2011strategy}) that there is a pair of positional strategies $(\sigma^{*},\tau^{*})$ that are called optimal strategies such that they are best response of each other. The value vector equilibrium $v_{\sigma^{*},\tau^{*}}$ is unique. Our goal is to compute this value vector. A known characteristic of a pair of optimal strategy is that its associated value vector satisfies local optimality.
    
    \begin{proposition}[\cite{condon1990algorithms}]\label{prop:local_optimality}
        Let $G=(V,E)$ be an SSG. For $\sigma$ a \M\ strategy, $(\sigma,\tau(\sigma))$ is a pair of optimal strategy if and only if $v_{\sigma}$ satisfies the following local optimality condition:
        \begin{itemize}
            \item for $x \in V_{\M}$, $v_{\sigma}(x) = \max \{v_{\sigma,\tau}(y)\;|\;(x,y) \in E\}$
            \item for $x \in V_{\m}$, $v_{\sigma}(x) = \min \{v_{\sigma,\tau}(y)\;|\;(x,y) \in E\}$
        \end{itemize}
    \end{proposition}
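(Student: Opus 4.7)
The plan is to prove the two implications separately, relying on the classical \emph{switch lemma} as the main technical tool.

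For the forward direction, I would assume $(\sigma,\tau(\sigma))$ is optimal and write $\tau = \tau(\sigma)$ for brevity. At a MIN vertex $x$, the best-response property of $\tau$ forces $\tau(x) \in \argmin_{(x,y)\in E} v_\sigma(y)$: otherwise one can pick an outneighbour $y$ with $v_\sigma(y) < v_\sigma(\tau(x))$ and define $\tau'$ equal to $\tau$ except at $x$, and then check that the Markov chain induced by $(\sigma, \tau')$ has strictly smaller expected sink value at $x$, contradicting that $\tau$ is a best response. At a MAX vertex $x$, I would argue by contradiction: if some outneighbour $y^*$ satisfies $v_\sigma(y^*) > v_\sigma(x)$, define $\sigma'$ agreeing with $\sigma$ except at $x$, where $\sigma'(x) = y^*$. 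The switch lemma then yields $v_{\sigma'} \geq v_\sigma$ with strict inequality at $x$, contradicting the maximality of $\sigma$.

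For the converse, assuming $v_\sigma$ satisfies local optimality, I would pick any optimal MAX strategy $\sigma^*$ and aim to show $v_\sigma = v_{\sigma^*}$. The inequality $v_{\sigma^*} \geq v_\sigma$ comes for free from the maximality of $\sigma^*$, so only the reverse inequality requires work. To obtain it I would introduce the Markov chain $M$ induced by the fixed pair $(\sigma^*, \tau(\sigma))$ and observe that local optimality makes $v_\sigma$ superharmonic for $M$: at each MAX vertex, $v_\sigma(x) \geq v_\sigma(\sigma^*(x))$ by the max condition; at each MIN vertex, $v_\sigma(x) = v_\sigma(\tau(\sigma)(x))$ because the best response $\tau(\sigma)$ attains the local min; and at each random vertex the usual linear equation holds. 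A standard martingale / absorption argument applied to $M$ then yields $v_\sigma(x) \geq v_{\sigma^*,\tau(\sigma)}(x)$ for every $x$, and since $\tau(\sigma^*)$ is a best response to $\sigma^*$ one has $v_{\sigma^*,\tau(\sigma)} \geq v_{\sigma^*,\tau(\sigma^*)} = v_{\sigma^*}$, closing the loop.

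The main obstacle is the switch lemma: making rigorous the intuition that swapping a single MAX choice toward a strictly higher-value outneighbour strictly improves $v_\sigma$ against MIN's best response. The standard proof analyses the Markov chain induced by $(\sigma',\tau(\sigma))$ and exploits linearity of expectation together with monotonicity in the swapped arc. A secondary difficulty arises in the converse when the chain $M$ is not almost-surely absorbing: since the paper explicitly does \emph{not} assume the SSG is stopping, I would exploit that $\val$ is bounded in $[0,1]$ and pass to the limit of finite-horizon expectations via bounded convergence, rather than invoke an optional stopping theorem directly.
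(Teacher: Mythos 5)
The paper never proves this proposition --- it is imported wholesale from Condon's work, and the only commentary is the post-statement remark about why the equivalence requires a \M\ strategy paired with its best response. So there is no in-paper proof to compare against; judged on its own, your argument is correct and is essentially the classical one, properly adapted to the non-stopping setting. Two points are worth making explicit. First, your converse works for exactly the right reason: superharmonicity of $v_\sigma$ along the chain $M$ induced by $(\sigma^*,\tau(\sigma))$ only needs the one-sided estimate $v_\sigma(z)\;\geq\;\mathbb{E}^{M}_{z}\left[v_\sigma(X_t)\right]\;\geq\;\sum_{s\in V_S}\mathbb{P}^{M}_{z}\left(\text{absorbed in } s \text{ by time } t\right)\val(s)$, where the unabsorbed mass is simply discarded because $v_\sigma\geq 0$; letting $t\to\infty$ gives $v_\sigma\geq v_{\sigma^*,\tau(\sigma)}$ with no stopping hypothesis, and the same sign consideration is precisely why the mirror statement for a \m\ strategy and its best response fails (looping forever yields payoff $0$, the best outcome for \m, so a locally optimal-looking \m\ strategy can cheat by never terminating). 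Your finite-horizon limit is the correct substitute for optional stopping here. Second, the detour through an optimal $\sigma^*$ is dispensable and slightly short of the stated conclusion: running the identical superharmonicity computation with an arbitrary $\sigma'$ in place of $\sigma^*$ yields $v_{\sigma',\tau(\sigma)}\leq v_{\sigma,\tau(\sigma)}$ for every \M\ strategy $\sigma'$, which directly exhibits $\sigma$ as a best response to $\tau(\sigma)$ and hence $(\sigma,\tau(\sigma))$ as a genuine pair of mutual best responses --- this extra line is needed if ``pair of optimal strategies'' is read as a Nash pair rather than merely ``$\sigma$ has maximal value''. The forward direction is fine: the \m\ half is the Bellman optimality equation of the MDP obtained by fixing $\sigma$ (and your decreasing switch for \m\ survives non-stopping by the same nonnegativity argument), while the \M\ half is exactly Proposition~\ref{prop:switch}, whose validity without the stopping assumption the paper itself delegates to the cited references; you are right to flag that lemma as the one place where real work is hidden.
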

    
    It is important to notice that the converse of the proposition is true only because we consider the best response to a \M\ strategy. If we consider a \m\ strategy and its best response, then it does not hold anymore because of possible loops.

\section{Switch set and super-switch}

    \subsection{Switch set}
    
        In this section we present an important concept for strategy improvement algorithm: the switch set. We present several properties of the switch set for SSG that we use to prove the complexity of our algorithms. 
    
        \begin{definition}
            Let $G=(V,E)$ be an SSG and $\sigma$ a \M\ strategy. The switch set of $\sigma$, written $S_{\sigma}$, is the set of vertices $x$ such that there is $y$ with $(x,y) \in E$ and $v_{\sigma}(x) < v_{\sigma}(y)$
        \end{definition}
        
        \begin{definition}
            Let $\sigma$ be a \M\ strategy. For all $x \in S_{\sigma}$, the improvement set, written $IS_{\sigma}(x)$ is the set of neighbours $y$ of $x$ such that $v_{\sigma}(y) > v_{\sigma}(x)$ and the best improvement option is defined as $bio_{\sigma}(x) = \argmax\limits_{y \in IS_{\sigma(x)}} \{v_{\sigma}(y)\}$.
        \end{definition}
        
        In other words, the switch set of a \M\ strategy is the set of \M\ vertices that do not satisfy the local optimality condition presented Proposition~\ref{prop:local_optimality}. This notion directly gives the concept of $\sigma$-switch.
        
        \begin{definition}
            Let $\sigma$ be a \M\ strategy with a non-empty switch set $S_\sigma$. A \M\ strategy $\sigma'$ is said to be a $\sigma$-switch if $\sigma' \neq \sigma$, for all $x \in V_{\M} \smallsetminus S_{\sigma}$, $\sigma(x) = \sigma'(x)$ and for all $x \in S_{\sigma}$ such that $\sigma'(x) \neq \sigma(x)$, $\sigma'(x) \in IS_{\sigma}(x)$.
        \end{definition}
        
        A $\sigma$-switch is a strategy, where the strategy on vertices that satisfy local optimality for $\sigma$ has been kept and it had been changed on some vertices that did not satisfy local optimality. Informally, as the name implies, it is a strategy where we "switch" the strategy on some vertices that can achieve immediate better value by selecting another child. A representation of a switch is given in Figure~\ref{fig:def_ss}.
        
        \begin{definition}
            For $\sigma$ a \M\ strategy, the total switch $\bar{\sigma}$ is the $\sigma$-switch where for all $x \in S_{\sigma}$, $\bar{\sigma}(x) = bio_{\sigma}(x)$.
        \end{definition}
        
        \begin{proposition}[\cite{condon1990algorithms,tripathi2011strategy}]\label{prop:switch}
        
            For $\sigma$ a \M\ strategy and $\sigma'$ a $\sigma$-switch, $v_{\sigma'} > v_{\sigma}$.
        
        \end{proposition}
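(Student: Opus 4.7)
The plan is a classical Howard-style policy improvement argument tailored to SSGs. Fix a best response $\tau$ to $\sigma$ and write $w := v_{\sigma,\tau} = v_\sigma$. Once \M's strategy is pinned to $\sigma'$, the game becomes a \m-minimisation MDP whose value vector $v_{\sigma'}$ is a fixed point of the monotone operator $F_{\sigma'}$ acting on value vectors $u$ by
\begin{align*}
F_{\sigma'}(u)(x) &= u(\sigma'(x)) && \text{if } x \in V_{\M}, \\
F_{\sigma'}(u)(x) &= \min_{(x,y) \in E} u(y) && \text{if } x \in V_{\m}, \\
F_{\sigma'}(u)(x) &= \sum_y p_x(y)\, u(y) && \text{if } x \in V_R, \\
F_{\sigma'}(u)(x) &= \val(x) && \text{if } x \in V_S.
\end{align*}
The whole argument is then to check that $w$ is a sub-fixed point of $F_{\sigma'}$, strictly so at some vertex, and to propagate this to $v_{\sigma'}$ by monotonicity.

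The first step, verifying $F_{\sigma'}(w) \geq w$ with a strict inequality somewhere, is a case split over the four kinds of vertices. Sinks and random vertices give equality directly from the Markov chain equations satisfied by $w = v_{\sigma,\tau}$. At \m\ vertices, the key observation is that $\tau$ is a best response to $\sigma$, so $w(\tau(x))$ already achieves the minimum of $w$ over out-neighbours and equality again holds. At \M\ vertices $x$ with $\sigma'(x) = \sigma(x)$ there is nothing to do. The only interesting coordinates are \M\ vertices $x$ with $\sigma'(x) \neq \sigma(x)$: by definition of a $\sigma$-switch one has $x \in S_\sigma$ and $\sigma'(x) \in IS_{\sigma}(x)$, which gives $F_{\sigma'}(w)(x) = w(\sigma'(x)) > w(x)$. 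Since $\sigma' \neq \sigma$, at least one such witness $x_0$ exists.

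To conclude I would invoke monotonicity of $F_{\sigma'}$ twice. Iterating $F_{\sigma'}^k$ from $w$ produces a non-decreasing sequence converging to $v_{\sigma'}$, giving $v_{\sigma'} \geq w$ pointwise; evaluating $v_{\sigma'} = F_{\sigma'}(v_{\sigma'}) \geq F_{\sigma'}(w)$ at $x_0$ yields $v_{\sigma'}(x_0) \geq F_{\sigma'}(w)(x_0) > w(x_0) = v_\sigma(x_0)$, the strict-coordinate witness required by the pointwise order defined in the excerpt. The subtle step I expect to be the main obstacle is the characterisation of $v_{\sigma'}$ as the relevant fixed point of $F_{\sigma'}$ when the SSG is not assumed stopping: the Bellman equations may then admit several solutions, and the naive shortcut of comparing the Markov chains $(\sigma',\tau)$ and $(\sigma,\tau)$ only delivers $v_{\sigma',\tau} \geq w$, which is the wrong direction since $v_{\sigma'} \leq v_{\sigma',\tau}$. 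I would circumvent this by using the existence of a best response $\tau'$ to $\sigma'$ granted by Condon's proposition to pin down $v_{\sigma'} = \min_{\tau'} v_{\sigma',\tau'}$, and to verify that this value is a fixed point of $F_{\sigma'}$ lying above every monotone iterate of $F_{\sigma'}$ starting from $w$, which is exactly what the closing step needs.
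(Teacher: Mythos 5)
The paper does not actually prove this proposition: it cites \cite{condon1990algorithms,tripathi2011strategy} and explicitly remarks that the classical demonstration relies on the stopping hypothesis, which was only later shown to be removable \cite{CHATTERJEE2013reachability,auger2021}. Your first step is correct and is the standard one: $w=v_\sigma$ is a post-fixed point of $F_{\sigma'}$ (equality at sinks, at random vertices, at unswitched \M\ vertices, and at \m\ vertices because $v_\sigma$ is the optimal value of the one-player game obtained by fixing $\sigma$ and hence satisfies the Bellman optimality equation there; strict inequality at each switched vertex by definition of $IS_\sigma$).

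The gap is in the closing step, and it sits exactly where you suspected. Without the stopping hypothesis, $F_{\sigma'}$ has in general many fixed points: \m\ benefits from trapping the play away from the sinks (non-termination contributes $0$ to the payoff), so on a subgraph with no forced access to a sink every constant is a fixed point while the true value is $0$. In lattice terms, a post-fixed point $w$ is only guaranteed to lie below the \emph{greatest} fixed point, and the monotone iteration $F_{\sigma'}^k(w)$ converges to the least fixed point \emph{above} $w$ --- which you have given no reason to identify with $v_{\sigma'}$, and which may exceed it. Your proposed repair, ``verify that $v_{\sigma'}$ lies above every iterate $F_{\sigma'}^k(w)$,'' is circular as stated: to deduce $F_{\sigma'}(v_{\sigma'})\geq F_{\sigma'}(w)$ from monotonicity you already need $v_{\sigma'}\geq w$, which is the conclusion being sought; and the alternative comparison through the Markov chain $(\sigma',\tau)$ gives, as you note yourself, an inequality in the wrong direction. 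Closing this gap genuinely requires additional structure --- e.g.\ a qualitative analysis of the vertices from which the best response to $\sigma'$ reaches the sinks almost surely, or a stopping perturbation followed by a limit argument --- which is precisely the content of \cite{CHATTERJEE2013reachability,auger2021}. Under the stopping hypothesis your argument is complete, since $F_{\sigma'}$ then has a unique fixed point; in the generality claimed by the paper it is not.
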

        
        The demonstration of this proposition uses the condition that the SSG is stopping, which means that the SSG ends in a sink with probability $1$ and this for any pair of strategy. However, it had been shown that this condition is unnecessary~\cite{CHATTERJEE2013reachability,auger2021}. 
        
        \begin{corollary}\label{coro:empty_set_optimal}
            A positional \M\ strategy $\sigma$ is optimal if and only if $S_{\sigma}$ is empty.
        \end{corollary}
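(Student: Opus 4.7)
The plan is to prove the two directions of the biconditional separately, relying on Proposition~\ref{prop:switch} for one direction and on Proposition~\ref{prop:local_optimality} for the other.

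For the ``only if'' direction I would argue by contraposition. Assume $S_\sigma$ is non-empty. Then the total switch $\bar\sigma$ is a well-defined $\sigma$-switch (at every $x \in S_\sigma$ pick $bio_\sigma(x) \in IS_\sigma(x)$, and keep $\sigma$ on the other vertices), and Proposition~\ref{prop:switch} yields $v_{\bar\sigma} > v_\sigma$, so $v_\sigma$ is pointwise strictly dominated in at least one coordinate. To conclude, I would combine this with two facts recalled in the preceding paragraphs: the equilibrium value vector $v_{\sigma^*,\tau^*}$ is unique, and every \M\ strategy $\sigma'$ satisfies $v_{\sigma'} = \min_\tau v_{\sigma',\tau} \leq v_{\sigma^*,\tau^*}$ pointwise. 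If $\sigma$ were optimal we would have $v_\sigma = v_{\sigma^*,\tau^*}$, and then $v_{\bar\sigma} > v_\sigma = v_{\sigma^*,\tau^*} \geq v_{\bar\sigma}$, a contradiction. Hence $S_\sigma$ must be empty.

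For the ``if'' direction, I would verify the two local-optimality conditions of Proposition~\ref{prop:local_optimality} for the pair $(\sigma, \tau(\sigma))$ under the assumption $S_\sigma = \emptyset$. On a \M\ vertex $x$, the hypothesis gives $v_\sigma(x) \geq v_\sigma(y)$ for every out-neighbour $y$ of $x$; since under $\sigma$ the token moves deterministically from $x$ to $\sigma(x)$ we also have $v_\sigma(x) = v_\sigma(\sigma(x))$, whence $v_\sigma(x) = \max\{v_\sigma(y) \mid (x,y) \in E\}$. On a \m\ vertex $x$, the minimum condition is a standard consequence of $\tau(\sigma)$ being a best response: otherwise, swapping $\tau(\sigma)(x)$ for a neighbour of strictly smaller value would yield a \m\ strategy with strictly smaller value at $x$, contradicting the minimality of $v_{\sigma,\tau(\sigma)}$. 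Proposition~\ref{prop:local_optimality} then delivers the optimality of $(\sigma, \tau(\sigma))$.

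The only subtle point, and the main (mild) obstacle, is the passage from the pointwise strict improvement produced by Proposition~\ref{prop:switch} to the scalar notion of optimality used in the statement; this is handled by the uniqueness of the equilibrium value vector together with the pointwise upper bound $v_\sigma \leq v_{\sigma^*,\tau^*}$, both of which are stated just before Proposition~\ref{prop:local_optimality}.
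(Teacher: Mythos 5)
Your proof is correct and follows the route the paper intends: the paper states this as an immediate corollary, with the "optimal $\Rightarrow$ empty" direction coming from Proposition~\ref{prop:switch} (a non-empty switch set yields a strictly improving $\sigma$-switch, contradicting optimality via the uniqueness of the equilibrium value vector) and the "empty $\Rightarrow$ optimal" direction from verifying the local optimality conditions of Proposition~\ref{prop:local_optimality}. The extra details you supply (the pointwise bound $v_{\sigma'} \leq v_{\sigma^*,\tau^*}$ and the Bellman-type argument for the \m\ vertices under a best response) are exactly the standard facts the paper leaves implicit.
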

        
        This property directly gives a family of algorithms called Hoffmann-Karp algorithms. Starting from some \M\ strategy $\sigma$, compute $v_{\sigma}$ then, if $S_{\sigma}$ is not empty, choose a $\sigma$-switch and iterate. Since the number of strategies is bounded by $d^{n}$ where $d$ is the degree of $G$, this algorithm terminates and provides an optimal \M\ strategy. Tripathi, Valkanova and Kumar shows in \cite{tripathi2011strategy} that for binary SSG, if at each iteration, we consider the $\sigma$-switch $\bar{\sigma}$, then the algorithm needs $O\left(2^{n}/n\right)$ iterations. We will show in Theorem~\ref{thm:tripathi_super_switch} an extended version of their main result that we use to find a bound of the Algorithm~\ref{alg:RP} presented in Section~\ref{sec:rec2}.
        
        For $T$ a set of \M\ vertices, and $\sigma$ a \M\ strategy, we define the subgame $G_{|T[\sigma]}$ as the game $G$ where some \M\ vertices have been replaced with random vertices that go to some vertex with probability one according to the \M\ strategy $\sigma$. In other words, $G_{|T[\sigma]}$ is the game $G$ where \M\ has to play as in $\sigma$ in $T$. 
        
        \begin{definition}
            Let $G = (V,E)$ an SSG. For $\sigma$ a strategy and $T$ a subset of $V_{\M}$ we write $G_{|T[\sigma]}=(V,E')$ the SSG that is a copy of $G$ where $E' = E \smallsetminus \{(x,y)\;|\;x \in T,\;y \neq \sigma(x)\}$ and all vertices $x$ of $T$ are random vertices, with associated probability distribution $p_x(\sigma(x)) = 1$.
        \end{definition}
        
        We provide an example of this transformation in Figure~\ref{fig:sub_game}.
        
        \begin{figure}
\centering

    \begin{tikzpicture}
    
        \node[draw, circle] (x1) at (0,4) {$x_1$};
		\node[draw, circle, fill=gray!50] (1) at (0.5,1) {$1$};
		\node[draw, rectangle, minimum height=0.8cm, minimum width=0.8cm] (n1) at (2.5,2) {$n_1$};
		\node[draw, diamond, aspect=1] (r1) at (1.5,3) {$r_1$};
		\node[draw, rectangle, minimum height=0.8cm, minimum width=0.8cm] (n2) at (1.5,5.5) {$n_2$};
		\node[draw, circle] (x2) at (2.5,4) {$x_2$};
		\node[draw, rectangle, minimum height=0.8cm, minimum width=0.8cm] (n3) at (3,7) {$n_3$};
		\node[draw, diamond, aspect=1] (r2) at (4,2.5) {$r_2$};
		\node[draw, circle] (x3) at (4,5.5) {$x_3$};
		\node[draw, circle] (x4) at (5.5,4) {$x_4$};
		\node[draw, diamond, aspect=1] (r3) at (5,1) {$r_3$};
		\node[draw, circle, fill=gray!50] (0) at (5,7) {$0$};
		
		\node (tr) at (6.5,3.5) {$\longrightarrow$};
		
		\node[draw, diamond, aspect=1] (x1b) at (7.5,4) {$r'_1$};
		\node[draw, circle, fill=gray!50] (1b) at (8,1) {$1$};
		\node[draw, rectangle, minimum height=0.8cm, minimum width=0.8cm] (n1b) at (10,2) {$n_1$};
		\node[draw, diamond, aspect=1] (r1b) at (9,3) {$r_1$};
		\node[draw, rectangle, minimum height=0.8cm, minimum width=0.8cm] (n2b) at (9,5.5) {$n_2$};
		\node[draw, circle] (x2b) at (10,4) {$x_2$};
		\node[draw, rectangle, minimum height=0.8cm, minimum width=0.8cm] (n3b) at (10.5,7) {$n_3$};
		\node[draw, diamond, aspect=1] (r2b) at (11.5,2.5) {$r_2$};
		\node[draw, diamond, aspect=1] (x3b) at (11.5,5.5) {$r'_3$};
		\node[draw, diamond, aspect=1] (x4b) at (13,4) {$r'_4$};
		\node[draw, diamond, aspect=1] (r3b) at (12.5,1) {$r_3$};
		\node[draw, circle, fill=gray!50] (0b) at (12.5,7) {$0$};
		
		\draw[->,>=latex] (x1)--(1);
		\draw[->,>=latex] (x1) to[bend right = 20] (n2);
		\draw[->,>=latex] (n1)--(1);
		\draw[->,>=latex] (n1)--(r1);
		\draw[->,>=latex] (r1)--(1);
		\draw[->,>=latex] (r1)--(r2);
		\draw[->,>=latex] (r1)--(x2);
		\draw[->,>=latex] (n2) to[bend right = 20] (x1);
		\draw[->,>=latex] (n2)--(n3);
		\draw[->,>=latex] (x2)--(n2);
		\draw[->,>=latex] (x2)--(r2);
		\draw[->,>=latex] (n3)--(0);
		\draw[->,>=latex] (n3)--(x3);
		\draw[->,>=latex] (r2)--(n1);
		\draw[->,>=latex] (r2)--(x3);
		\draw[->,>=latex] (r2) to[bend right = 20] (x4);
		\draw[->,>=latex] (r2) to[bend right = 20] (r3);
		\draw[->,>=latex] (x3)--(0);
		\draw[->,>=latex] (x3)--(x2);
		\draw[->,>=latex] (r3)--(1);
		\draw[->,>=latex] (r3) to[bend right = 20] (r2);
		\draw[->,>=latex] (r3)--(x4);
		\draw[->,>=latex] (x4)--(x3);
		\draw[->,>=latex] (x4) to[bend right = 20] (r2);
		
		\draw[->,>=latex] (x1b)--(1b);
		\draw[->,>=latex] (n1b)--(1b);
		\draw[->,>=latex] (n1b)--(r1b);
		\draw[->,>=latex] (r1b)--(1b);
		\draw[->,>=latex] (r1b)--(r2b);
		\draw[->,>=latex] (r1b)--(x2b);
		\draw[->,>=latex] (n2b) to[bend right = 20] (x1b);
		\draw[->,>=latex] (n2b)--(n3b);
		\draw[->,>=latex] (x2b)--(n2b);
		\draw[->,>=latex] (x2b)--(r2b);
		\draw[->,>=latex] (n3b)--(0b);
		\draw[->,>=latex] (n3b)--(x3b);
		\draw[->,>=latex] (r2b)--(n1b);
		\draw[->,>=latex] (r2b)--(x3b);
		\draw[->,>=latex] (r2b) to[bend right = 20] (x4b);
		\draw[->,>=latex] (r2b) to[bend right = 20] (r3b);
		\draw[->,>=latex] (x3b)--(x2b);
		\draw[->,>=latex] (r3b)--(1b);
		\draw[->,>=latex] (r3b) to[bend right = 20] (r2b);
		\draw[->,>=latex] (r3b)--(x4b);
		\draw[->,>=latex] (x4b) to[bend right = 20] (r2b);

    \end{tikzpicture}

\caption{Transformation of the game $G$ in the game $G_{T[\sigma]}$ where $T=\{x_1;x_3,x_4\}$ and $\sigma(x_1) = 1$, $\sigma(x_3) = x_2$ and $\sigma(x_4) = r2$. The probability distribution on the random vertices is the uniform distribution.}
\label{fig:sub_game}
\end{figure}
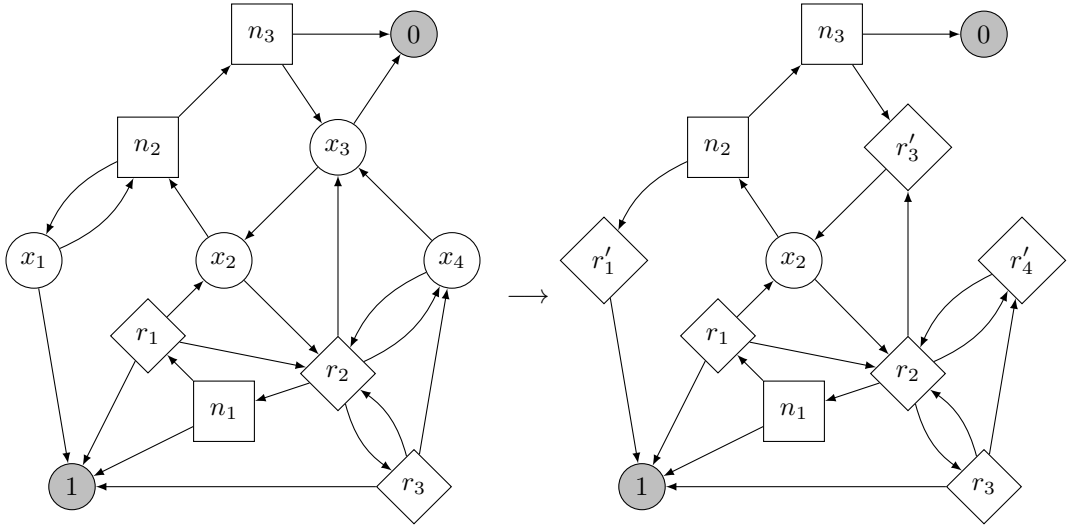
        
        There is a bijection between the strategy of $G_{|T[\sigma]}$ and the strategy of $G$ that plays has in $\sigma$ in T. In the rest of the paper, we identify those two sets of strategy. Moreover, for a strategy $\sigma'$ that plays as in $\sigma$ in $T$ and any \m\ strategy $\tau$, there is equality between the value vectors $v_{\sigma',\tau}$ in $G$ and $G_{|T[\sigma]}$.
        
        \begin{lemma}[\cite{tripathi2011strategy}]\label{lem:opt_switch}
            For any \M\ strategy $\sigma$, $\sigma$ is optimal in $G_{|S_{\sigma}[\sigma]}$.
        \end{lemma}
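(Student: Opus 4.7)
The plan is to apply Corollary~\ref{coro:empty_set_optimal} inside the subgame: it suffices to exhibit that the switch set of $\sigma$, computed in $G_{|S_{\sigma}[\sigma]}$, is empty. The argument has two ingredients: first, that the value vector $v_{\sigma}$ is preserved when passing from $G$ to $G_{|S_{\sigma}[\sigma]}$; second, that the \M\ vertices surviving in the subgame are precisely those for which $\sigma$ already satisfied local optimality in $G$.

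First, I would observe that the set of \m\ strategies is identical in $G$ and $G_{|S_{\sigma}[\sigma]}$, since the transformation only affects \M\ vertices of $S_{\sigma}$. Using the equality of value vectors $v_{\sigma,\tau}^{G} = v_{\sigma,\tau}^{G_{|S_{\sigma}[\sigma]}}$ stated just before the lemma (which applies because $\sigma$ trivially plays as $\sigma$ on $S_{\sigma}$), a best response $\tau$ to $\sigma$ in $G$ is also a best response in $G_{|S_{\sigma}[\sigma]}$ and vice-versa. Consequently $v_{\sigma}$ has the same definition and the same numerical value in both games.

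Next, I would analyse the switch set of $\sigma$ in $G_{|S_{\sigma}[\sigma]}$. The \M\ vertices of the subgame are $V_{\M}\setminus S_{\sigma}$, since every vertex of $S_{\sigma}$ has been turned into a random vertex. Pick any $x\in V_{\M}\setminus S_{\sigma}$. By definition of $S_{\sigma}$ in $G$, there is no out-neighbour $y$ of $x$ with $v_{\sigma}(y)>v_{\sigma}(x)$. The outgoing edges of $x$ are unchanged by the transformation (only edges leaving $S_{\sigma}$ are pruned), and by the previous paragraph the value vector is unchanged. Hence $x$ does not belong to the switch set of $\sigma$ in $G_{|S_{\sigma}[\sigma]}$ either, so that switch set is empty and Corollary~\ref{coro:empty_set_optimal} yields optimality of $\sigma$ in $G_{|S_{\sigma}[\sigma]}$.

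The only genuinely delicate point is the invariance of $v_{\sigma}$ under the transformation; everything else is bookkeeping. This invariance is however already packaged in the remark preceding the lemma, so the proof reduces to the clean two-step argument above.
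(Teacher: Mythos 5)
Your proof is correct and follows the same route as the paper: show that the switch set of $\sigma$ computed in $G_{|S_{\sigma}[\sigma]}$ is empty and invoke Corollary~\ref{coro:empty_set_optimal}. The paper's proof is a one-line assertion of exactly this; you have merely filled in the bookkeeping (invariance of $v_{\sigma}$ and the identification of the surviving \M\ vertices), which is welcome but not a different argument.
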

        
        \begin{proof}
            The switch set of $\sigma$ in $G_{|T[\sigma]}$ is empty, hence, by Corollary~\ref{coro:empty_set_optimal}, $\sigma$ is optimal in $G_{|T[\sigma]}$.
        \end{proof}

        The following proposition also appears in \cite{tripathi2011strategy} and it allows us to study strategies by just looking at their switch set.
        
        \begin{proposition}\label{prop:set_inclusion}
            Let $G$ be a binary SSG. For $\sigma$ and $\sigma'$ two \M\ strategies such that $S_{\sigma} \subsetneq S_{\sigma'}$, then $v_{\sigma} > v_{\sigma'}$.
        \end{proposition}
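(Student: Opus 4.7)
I plan to prove the proposition in two parts: first establish the componentwise inequality $v_\sigma \geq v_{\sigma'}$ by a maximum-principle argument on $\delta := v_{\sigma'} - v_\sigma$, then upgrade to a strict inequality using the strict inclusion of switch sets. For the first part, suppose for contradiction that $M := \max_{y \in V} \delta(y) > 0$ and let $Y := \{y : \delta(y) = M\}$. Since every sink has $\delta = 0 < M$, the set $Y$ contains no sink. The plan is to show that $Y$ is forward-invariant under the Markov chain induced by $(\sigma', \tau(\sigma))$; a chain started in $Y$ then never reaches a sink, contradicting the stopping hypothesis (which can be dropped by the arguments of \cite{CHATTERJEE2013reachability,auger2021}, as remarked after Proposition~\ref{prop:switch}).

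Invariance at random vertices is immediate: the equality $\sum p_y(z)\delta(z) = M$ with each $\delta(z) \leq M$ forces every child in the support to lie in $Y$. At a \m\ vertex $y \in Y$, the choice $z := \tau(\sigma)(y)$ minimizes $v_\sigma$, so $v_\sigma(z) = v_\sigma(y)$, while $v_{\sigma'}(z) \geq v_{\sigma'}(y)$, giving $\delta(z) \geq M$ and hence $z \in Y$. The crucial step, where the binary hypothesis enters, is the \M\ vertex case: let $y \in Y \cap V_\M$ with $\sigma(y) \neq \sigma'(y)$, so that $\sigma(y)$ and $\sigma'(y)$ are precisely the two children of $y$. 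If $y \in S_\sigma$, the inclusion $S_\sigma \subseteq S_{\sigma'}$ gives $y \in S_{\sigma'}$ too, yielding $v_\sigma(\sigma(y)) < v_\sigma(\sigma'(y))$ and $v_{\sigma'}(\sigma'(y)) < v_{\sigma'}(\sigma(y))$; then $\delta(\sigma(y)) - M = v_{\sigma'}(\sigma(y)) - v_{\sigma'}(\sigma'(y)) > 0$, contradicting the maximality of $M$. So $y \notin S_\sigma$, which gives $v_\sigma(\sigma(y)) \geq v_\sigma(\sigma'(y))$; combining with $\delta(\sigma'(y)) \leq M$ forces $v_\sigma(\sigma(y)) = v_\sigma(\sigma'(y))$, hence $\sigma'(y) \in Y$.

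For the strict inequality, assume $v_\sigma = v_{\sigma'} =: v$ and pick $x \in S_{\sigma'} \setminus S_\sigma$. If $\sigma(x) = \sigma'(x)$, the common other child $z$ must simultaneously satisfy $v(z) \leq v(\sigma(x))$ (from $x \notin S_\sigma$) and $v(z) > v(\sigma(x))$ (from $x \in S_{\sigma'}$), a contradiction. Otherwise $\sigma(x)$ and $\sigma'(x)$ are the two distinct children of $x$, so $v(x) = v(\sigma(x)) = v(\sigma'(x))$, clashing with $v(\sigma'(x)) < v(\sigma(x))$ coming from $x \in S_{\sigma'}$. The main obstacle is the \M-vertex invariance step: this is exactly where the binary hypothesis is essential, since it converts the set-theoretic inclusion $S_\sigma \subseteq S_{\sigma'}$ into a usable pointwise comparison of values at the two children of an \M\ vertex; without it the inclusion carries no information relating $\sigma$ and $\sigma'$ locally.
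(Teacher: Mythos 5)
Your proof is correct, but it takes a genuinely different route from the paper's. The paper disposes of the statement in three lines using machinery already in place: $\sigma$ is optimal in $G_{|S_{\sigma}[\sigma]}$ (Lemma~\ref{lem:opt_switch}), so either $\sigma'$ is itself a strategy of that subgame (and is non-optimal there since $S_{\sigma'} \supsetneq S_{\sigma}$), or one overwrites $\sigma'$ on $S_{\sigma}$ with $\sigma$'s choices to obtain a strategy $\sigma''$ of the subgame which is a $\sigma'$-switch, whence $v_{\sigma'} < v_{\sigma''} \leq v_{\sigma}$; binarity enters only to guarantee that changing $\sigma'$ on vertices of $S_{\sigma'}$ automatically yields a $\sigma'$-switch. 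You instead establish the componentwise inequality from first principles by a maximum principle on $\delta = v_{\sigma'} - v_{\sigma}$, which in effect re-derives Proposition~\ref{prop:switch} inline; this is more self-contained and makes visible exactly where the hypothesis $S_{\sigma} \subseteq S_{\sigma'}$ acts (your $y \in S_{\sigma}$ subcase), but it is longer and leans on two facts the paper never states explicitly: that $v_{\sigma'}$ satisfies the Bellman equation at \m\ vertices (true for best responses, but Proposition~\ref{prop:local_optimality} is asserted only for optimal pairs), and the stopping hypothesis. Two touch-ups would make it airtight: record the trivial \M\ case $\sigma(y) = \sigma'(y)$, where $\delta(\sigma'(y)) = \delta(y) = M$ directly; and instead of deferring non-stopping games to the cited references, observe that a play confined to $Y$ pays $0$, so $v_{\sigma'} \leq v_{\sigma',\tau(\sigma)} = 0$ on $Y$, which is incompatible with $v_{\sigma'} = v_{\sigma} + M > 0$ there --- this removes the stopping assumption with no outside appeal. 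Your strictness step via a vertex of $S_{\sigma'} \smallsetminus S_{\sigma}$ is correct and uses binarity at exactly the same point the paper does.
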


        \begin{proof}
            Let $\sigma$ and $\sigma'$ be two \M\ strategies such that $S_{\sigma} \subsetneq S_{\sigma'}$. If $\sigma'$ is a strategy of $G_{|S_{\sigma}[\sigma]}$, then $\sigma'$ is not optimal in $G_{|S_{\sigma}[\sigma]}$ by Corollary~\ref{coro:empty_set_optimal} and by Lemma~\ref{lem:opt_switch}, $v_{\sigma}>v_{\sigma'}$. Otherwise, we consider $\sigma''$ that plays as in $\sigma'$ in $V_{\M} \smallsetminus S_{\sigma}$ and as in $\sigma$ in $S_{\sigma}$. $\sigma''$ is a $\sigma'$-switch, and a strategy of $G_{|S_{\sigma}[\sigma]}$, thus: $v_{\sigma'} < v_{\sigma''} \leq v_{\sigma}$
        \end{proof}
        
        In the same way, we can also consider the case where both switch sets are equal.
        
        \begin{lemma}\label{lem:eq_vv}
            Let $G$ be a binary SSG. For $\sigma$ and $\sigma'$ two \M\ strategies such that $S_{\sigma} = S_{\sigma'}$, then $v_{\sigma} = v_{\sigma'}$ and for every $x$ in $S_{\sigma}$, $\sigma(x) = \sigma'(x)$
        \end{lemma}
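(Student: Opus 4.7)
The plan is to argue in two stages, first establishing that $\sigma$ and $\sigma'$ coincide on $S_\sigma = S_{\sigma'}$, and then deducing the equality of value vectors from this.

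For the first stage, I would argue by contradiction. Assume there exists some $x_0 \in S_\sigma = S_{\sigma'}$ with $\sigma(x_0) \neq \sigma'(x_0)$. Since $G$ is binary, at every such $x$ the strategies $\sigma$ and $\sigma'$ select opposite children. Define $\sigma''$ to agree with $\sigma'$ on $V_{\M} \smallsetminus S_{\sigma'}$ and with $\sigma$ on $S_\sigma$. I want to show $\sigma''$ is a $\sigma'$-switch: the agreement condition outside $S_{\sigma'}$ holds by construction, and for $x \in S_{\sigma'}$ with $\sigma''(x) \neq \sigma'(x)$, the key observation (using binarity) is that $\sigma''(x) = \sigma(x)$ is the \emph{other} neighbour of $x$, and since $x \in S_{\sigma'}$ one of its two neighbours has strictly higher $v_{\sigma'}$-value than $v_{\sigma'}(x) = v_{\sigma'}(\sigma'(x))$, so this other neighbour lies in $IS_{\sigma'}(x)$. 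Proposition~\ref{prop:switch} then yields $v_{\sigma''} > v_{\sigma'}$. On the other hand, $\sigma''$ agrees with $\sigma$ on $S_\sigma$, so $\sigma''$ is a strategy of $G_{|S_\sigma[\sigma]}$, in which $\sigma$ is optimal by Lemma~\ref{lem:opt_switch}; together with the identification of value vectors between $G$ and $G_{|S_\sigma[\sigma]}$ this gives $v_{\sigma''} \leq v_\sigma$. Combining, $v_\sigma > v_{\sigma'}$. A symmetric argument swapping the roles of $\sigma$ and $\sigma'$ produces the reverse strict inequality, which is the desired contradiction.

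For the second stage, since we now have $\sigma(x) = \sigma'(x)$ for every $x \in S_\sigma = S_{\sigma'}$, the two games $G_{|S_\sigma[\sigma]}$ and $G_{|S_{\sigma'}[\sigma']}$ are in fact the \emph{same} subgame. Both $\sigma$ and $\sigma'$ are strategies of this common subgame, and both are optimal in it by Lemma~\ref{lem:opt_switch}. Applying optimality of $\sigma$ gives $v_\sigma \geq v_{\sigma'}$, while optimality of $\sigma'$ gives $v_{\sigma'} \geq v_\sigma$; hence $v_\sigma = v_{\sigma'}$.

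The main obstacle to watch out for is the verification that $\sigma''$ really is a $\sigma'$-switch. That step is exactly where the binarity hypothesis is used in an essential way: in a binary game, membership in the switch set, together with the fact that $\sigma'(x)$ already achieves $v_{\sigma'}(x)$, forces the alternative neighbour $\sigma(x)$ to be strictly better under $v_{\sigma'}$, which would generally fail for larger outdegree. Once this local check is secured, the rest of the argument is just an application of Proposition~\ref{prop:switch} together with the optimality of $\sigma$ (respectively $\sigma'$) in the appropriate frozen subgame.
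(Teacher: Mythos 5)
Your proof is correct and follows essentially the same route as the paper: build $\sigma''$ agreeing with $\sigma$ on $S_\sigma$ and with $\sigma'$ elsewhere, show it is a $\sigma'$-switch to get $v_\sigma > v_{\sigma'}$, symmetrize for the contradiction, then conclude $v_\sigma = v_{\sigma'}$ from joint optimality in the common subgame $G_{|S_\sigma[\sigma]}$. Your explicit verification that binarity forces $\sigma(x) \in IS_{\sigma'}(x)$ is a detail the paper leaves implicit, but the argument is the same.
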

        
        \begin{proof}
            Let $\sigma$ and $\sigma'$ be two strategies that satisfies our condition, and for the sake of contradiction, let us assume that there is $x$ in $S_{\sigma}$ such that $\sigma(x) \neq \sigma'(x)$. Then, the strategy $\sigma''$ that plays as $\sigma$ in $S_{\sigma}$ and as $\sigma'$ in the other vertices is a $\sigma'$-switch. Thus $v_{\sigma''} > v_{\sigma'}$. Since $\sigma''$ plays as $\sigma$ on $S_{\sigma}$, by Lemma~\ref{lem:opt_switch}, $v_{\sigma} \geq v_{\sigma''}$. Thus $v_{\sigma} > v_{\sigma'}$. We similarly can prove that $v_{\sigma'} > v_{\sigma}$, which yields a contradiction. Hence, $\sigma$ and $\sigma'$ plays similarly on $S_{\sigma}$. The strategy $\sigma$ and $\sigma'$ are optimal strategy of $G_{|S_{\sigma}[\sigma]}$ and thus have the same value.
        \end{proof}
        
    \subsection{Super-Switch}
    
        In this section, we will extend the classic notion of switch in order to add a perfect resolution on some of its vertices.
        
        A super-switch is obtained by switching some vertices of $\sigma$, fixing the strategy on a set of \M\ vertices that include the switched vertices and then considering the optimal strategy of the subgame.
        
        Let $\sigma$ be a \M\ strategy with a non-empty switch set $S_{\sigma}$ and $T \subset V_{\M}$ with $T \cap S_{\sigma} \neq \emptyset$. A $(\sigma,T)$-super-switch is a strategy $\tilde{\sigma}$ obtained from an intermediate $\sigma$-switch $\sigma'$ such that $\forall x \notin T,\; \sigma(x) =\sigma'(x)$ by computing an optimal strategy of $G_{|T[\sigma']}$.
        
        We gives a representation of a super-switch in Figure~\ref{fig:def_ss}
        
        \begin{figure}
\centering

    \begin{tikzpicture}
    
        \node[draw, circle, fill=gray!50] (0) at (0,4) {$0$};
        \node[draw, diamond, aspect=1] (r1) at (0,2) {$r_1$};
        \node (vr1) at (-0.5,2.5) {$2/7$};
        \node[draw, circle] (x1) at (1.75,2.75) {$x_1$};
        \node (vx1) at (2,3.3) {$4/7$};
        \node[draw, circle] (x2) at (1,0) {$x_2$};
        \node (vx2) at (0.25,0) {$2/7$};
        \node[draw, diamond, aspect=1] (r2) at (2,4.5) {$r_2$};
        \node (vr2) at (2.5,5) {$1/7$};
        \node[draw, circle] (x3) at (2.5,1) {$x_3$};
        \node (vx3) at (2.5,0.4) {$0$};
        \node[draw, diamond, aspect=1] (r3) at (3.5,2) {$r_3$};
        \node (vr3) at (4,2.5) {$4/7$};
        \node[draw, circle] (x4) at (4.5,0) {$x_4$};
        \node (vx4) at (5.1,0) {$0$};
        \node[draw, circle] (x5) at (4,4) {$x_5$};
        \node (vx5) at (4.75,4) {$1/7$};
        \node[draw, circle, fill=gray!50] (1) at (5,2) {$1$};
        
		\node (sig1) at (0.75,4.75) {$\sigma$:};
		\node (sig2) at (8.75,4.75) {$\sigma'$:};
		\node (sig3) at (4.75,-1.25) {$\sigma''$:};
		
		\node (tr) at (6.5,2.25) {$\longrightarrow$};
		
		\node[draw, circle, fill=gray!50] (0b) at (8,4) {$0$};
        \node[draw, diamond, aspect=1] (r1b) at (8,2) {$r_1$};
        \node (vr1b) at (7.5,2.5) {$2/7$};
        \node[draw, circle] (x1b) at (9.75,2.75) {$x_1$};
        \node (vx1b) at (10,3.3) {$1/7$};
        \node[draw, circle] (x2b) at (9,0) {$x_2$};
        \node (vx2b) at (8.25,0) {$2/7$};
        \node[draw, diamond, aspect=1] (r2b) at (10,4.5) {$r_2$};
        \node (vr2b) at (10.5,5) {$1/7$};
        \node[draw, circle] (x3b) at (10.5,1) {$x_3$};
        \node (vx3b) at (10.5,0.4) {$4/7$};
        \node[draw, diamond, aspect=1] (r3b) at (11.5,2) {$r_3$};
        \node (vr3b) at (12,2.5) {$4/7$};
        \node[draw, circle] (x4b) at (12.5,0) {$x_4$};
        \node (vx4b) at (13.1,0) {$1$};
        \node[draw, circle] (x5b) at (12,4) {$x_5$};
        \node (vx5b) at (12.75,4) {$1/7$};
        \node[draw, circle, fill=gray!50] (1b) at (13,2) {$1$};
        
        \node (tr) at (2.5,-3.75) {$\longrightarrow$};
        
        \node[draw, circle, fill=gray!50] (0c) at (4,-2) {$0$};
        \node[draw, diamond, aspect=1] (r1c) at (4,-4) {$r_1$};
        \node (vr1c) at (3.5,-3.5) {$2/7$};
        \node[draw, circle] (x1c) at (5.75,-3.25) {$x_1$};
        \node (vx1c) at (6,3.3) {$1/7$};
        \node[draw, circle] (x2c) at (5,-6) {$x_2$};
        \node (vx2c) at (4.25,-6) {$4/7$};
        \node[draw, diamond, aspect=1] (r2c) at (6,-1.5) {$r_2$};
        \node (vr2c) at (6.5,-1) {$1/7$};
        \node[draw, circle] (x3c) at (6.5,-5) {$x_3$};
        \node (vx3c) at (6.5,-5.6) {$4/7$};
        \node[draw, diamond, aspect=1] (r3c) at (7.5,-4) {$r_3$};
        \node (vr3c) at (8,-3.5) {$4/7$};
        \node[draw, circle] (x4c) at (8.5,-6) {$x_4$};
        \node (vx4c) at (9.1,-6) {$1$};
        \node[draw, circle] (x5c) at (8,-2) {$x_5$};
        \node (vx5c) at (8.75,-2) {$1/7$};
        \node[draw, circle, fill=gray!50] (1c) at (9,-4) {$1$};

		\draw[->,>=latex, dashed] (x1)--(0);
		\draw[->,>=latex] (x1)--(r3);
		\draw[->,>=latex] (x2)--(r1);
		\draw[->,>=latex, dashed] (x2)--(x3);
		\draw[->,>=latex, dashed] (x3)--(r3);
		\draw[->,>=latex] (x3) to[bend right = 20] (x4);
		\draw[->,>=latex] (x4) to[bend right = 20] (x3);
		\draw[->,>=latex, dashed] (x4)--(1);
		\draw[->,>=latex] (x5)--(r2);
		\draw[->,>=latex, dashed] (x5)--(1);
		
		\draw[->,>=latex] (r1)--(0);
		\draw[->,>=latex] (r1)--(r3);
		\draw[->,>=latex] (r2)--(0);
		\draw[->,>=latex] (r2)--(r1);
		\draw[->,>=latex] (r3)--(r2);
		\draw[->,>=latex] (r3)--(1);
		
		\draw[->,>=latex, dashed] (x1b)--(0b);
		\draw[->,>=latex] (x1b)--(r3b);
		\draw[->,>=latex] (x2b)--(r1b);
		\draw[->,>=latex, dashed] (x2b)--(x3b);
		\draw[->,>=latex] (x3b)--(r3b);
		\draw[->,>=latex, dashed] (x3b) to[bend right = 20] (x4b);
		\draw[->,>=latex, dashed] (x4b) to[bend right = 20] (x3b);
		\draw[->,>=latex] (x4b)--(1b);
		\draw[->,>=latex] (x5b)--(r2b);
		\draw[->,>=latex, dashed] (x5b)--(1b);
		
		\draw[->,>=latex] (r1b)--(0b);
		\draw[->,>=latex] (r1b)--(r3b);
		\draw[->,>=latex] (r2b)--(0b);
		\draw[->,>=latex] (r2b)--(r1b);
		\draw[->,>=latex] (r3b)--(r2b);
		\draw[->,>=latex] (r3b)--(1b);

		\draw[->,>=latex, dashed] (x1c)--(0c);
		\draw[->,>=latex] (x1c)--(r3c);
		\draw[->,>=latex, dashed] (x2c)--(r1c);
		\draw[->,>=latex] (x2c)--(x3c);
		\draw[->,>=latex] (x3c)--(r3c);
		\draw[->,>=latex, dashed] (x3c) to[bend right = 20] (x4c);
		\draw[->,>=latex, dashed] (x4c) to[bend right = 20] (x3c);
		\draw[->,>=latex] (x4c)--(1c);
		\draw[->,>=latex] (x5c)--(r2c);
		\draw[->,>=latex, dashed] (x5c)--(1c);
		
		\draw[->,>=latex] (r1c)--(0c);
		\draw[->,>=latex] (r1c)--(r3c);
		\draw[->,>=latex] (r2c)--(0c);
		\draw[->,>=latex] (r2c)--(r1c);
		\draw[->,>=latex] (r3c)--(r2c);
		\draw[->,>=latex] (r3c)--(1c);

    \end{tikzpicture}

\caption{The strategy of \M\ are represented by plain arcs and the probability distribution on the random vertices is the uniform distribution. The switch set of $\sigma$, $S_{\sigma}$ is $\{x_3,x_4,x_5\}$. The strategy $\sigma'$ is a $\sigma$-switch and $\sigma''$ is a $(\sigma,S_{\sigma})$-super-switch.}
\label{fig:def_ss}
\end{figure}
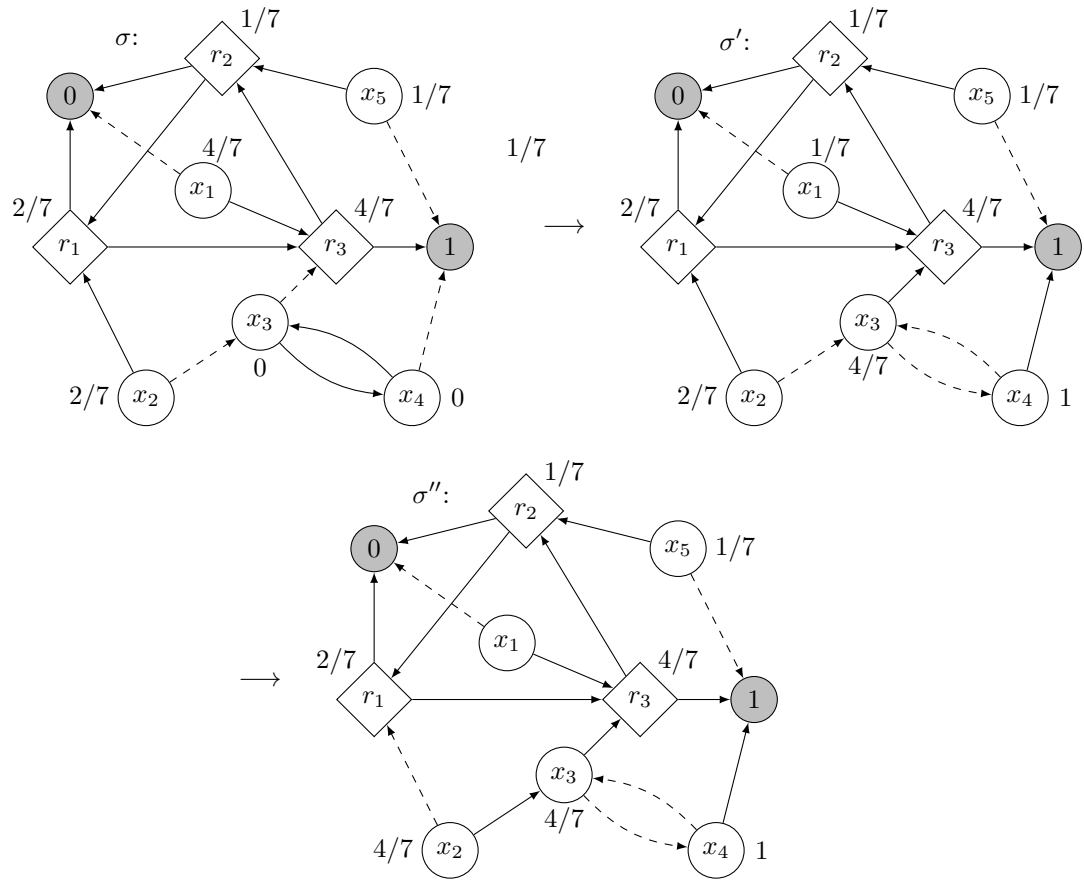
        
        \begin{lemma}
            For $\sigma$ a non-optimal \M\ strategy, $T$ a set of \M\ vertices and $\sigma'$ a $(\sigma,T)$-super-switch, $v_{\sigma'} > v_{\sigma}$.
        \end{lemma}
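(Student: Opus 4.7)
The plan is to unfold the definition of a super-switch in three steps and chain inequalities through the transfer of values between $G$ and the subgame $G_{|T[\hat\sigma]}$. By hypothesis, there is an intermediate $\sigma$-switch $\hat\sigma$ that agrees with $\sigma$ outside $T$, and $\sigma'$ is an optimal \M\ strategy of $G_{|T[\hat\sigma]}$. All values below are computed in $G$ unless a superscript specifies otherwise.

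First, I apply Proposition~\ref{prop:switch} to the intermediate switch $\hat\sigma$ to obtain the strict inequality $v_{\hat\sigma} > v_{\sigma}$. This step uses that $\sigma$ is non-optimal so that $S_{\sigma}$ is non-empty and a proper $\sigma$-switch exists (otherwise $T \cap S_{\sigma} = \emptyset$ would force $\hat\sigma = \sigma$, contradicting the super-switch definition).

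Second, I pass to the subgame. The identification of strategies recalled just after the definition of $G_{|T[\sigma]}$ says that any strategy of $G$ which plays as $\hat\sigma$ on $T$ has the same value vector in $G$ and in $G_{|T[\hat\sigma]}$ (for every \m\ strategy, hence also against the best response). Both $\hat\sigma$ itself and $\sigma'$ (viewed as a strategy of $G_{|T[\hat\sigma]}$, identified with a strategy of $G$ that agrees with $\hat\sigma$ on $T$) satisfy this, so
\[
v_{\hat\sigma}^{G} = v_{\hat\sigma}^{G_{|T[\hat\sigma]}}, \qquad v_{\sigma'}^{G} = v_{\sigma'}^{G_{|T[\hat\sigma]}}.
\]

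Third, since $\sigma'$ is by construction optimal in $G_{|T[\hat\sigma]}$ while $\hat\sigma$ is just some strategy of that subgame, we have $v_{\sigma'}^{G_{|T[\hat\sigma]}} \geq v_{\hat\sigma}^{G_{|T[\hat\sigma]}}$. Combining with the equalities above and the first step yields
\[
v_{\sigma'} \;=\; v_{\sigma'}^{G_{|T[\hat\sigma]}} \;\geq\; v_{\hat\sigma}^{G_{|T[\hat\sigma]}} \;=\; v_{\hat\sigma} \;>\; v_{\sigma},
\]
which gives the desired strict inequality.

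The only real obstacle is bookkeeping: making sure each value vector is interpreted in the correct game and that the identification of strategies between $G$ and $G_{|T[\hat\sigma]}$ is invoked for objects that actually play as $\hat\sigma$ on $T$. There is no computational difficulty; the argument is essentially a two-line chain once the definitions are properly lined up.
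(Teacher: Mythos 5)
Your proof is correct and follows essentially the same route as the paper's: both identify the intermediate $\sigma$-switch (which the paper reconstructs explicitly as the strategy $\sigma''$ playing as $\sigma'$ on $T$ and as $\sigma$ elsewhere), apply Proposition~\ref{prop:switch} to get the strict inequality, and then use optimality of $\sigma'$ in the subgame $G_{|T[\cdot]}$ together with the value identification between $G$ and the subgame to conclude. Your extra bookkeeping about which game each value vector lives in is a welcome clarification but not a different argument.
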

        
        \begin{proof}
            We consider the strategy $\sigma''$ that plays as $\sigma'$ on T and as $\sigma$ on $V_{\M} \smallsetminus T$. The strategy $\sigma''$ is a $\sigma$-switch and by Proposition~\ref{prop:switch} $v_{\sigma''} > v_{\sigma}$. Moreover, $\sigma'$ is the optimal strategy of $G_{|T[\sigma'']}$ and $\sigma''$ is also a strategy of $G_{|T[\sigma'']}$. Hence, $v_{\sigma'} \geq v_{\sigma''}$ and $v_{\sigma'} > v_{\sigma}$.
        \end{proof}
        
        In~\cite{tripathi2011strategy}, Tripathi, Valkanova and Kumar, show that for a strategy $\sigma$, there is at least $|S|-1$ $\sigma$-switch $\sigma'$ different from $\bar{\sigma}$ such that $v_{\sigma'} \leq v_{\bar{\sigma}}$. In Theorem~\ref{thm:tripathi_super_switch} we adapt their proof to show that this result can be extended to super-switch.
        
        For $\sigma$ a \M\ strategy, $T$ a set of \M\ vertices and $\sigma'$ a $(\sigma,T)$-super-switch, we note $D_{\sigma,\sigma'}$ the set $\{x\;|\;\sigma(x) \neq \sigma'(x), x \in T\}$ the set of vertices that has been switched in the intermediate $\sigma$-switch. 
        
        \begin{theorem}\label{thm:tripathi_super_switch}
            For $\sigma$ a non-optimal \M\ strategy, $T$ a set of \M\ vertices intersecting $S_{\sigma}$ and $\sigma'$ a $(\sigma,T)$-super-switch, there is at least $|D_{\sigma,\sigma'}|-1$ $(\sigma,T)$-super-switches $\sigma'' \neq \sigma'$ such that $v_{\sigma} < v_{\sigma''} \leq v_{\sigma'}$.
        \end{theorem}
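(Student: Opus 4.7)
The overall plan is to adapt to super-switches the construction Tripathi, Valkanova and Kumar use in the binary-SSG analogue of this statement. Let $\bar\sigma$ denote the intermediate $\sigma$-switch from which $\sigma'$ is obtained, so that $\sigma'$ plays as $\bar\sigma$ on $T$ and is optimal in $G_{|T[\bar\sigma]}$, and set $D := D_{\sigma,\sigma'} = \{y\in T : \bar\sigma(y)\neq\sigma(y)\}$. If $|D| = 1$ the claim is vacuous, so I assume $|D| \geq 2$. For each $y \in D$ I build an intermediate strategy $\bar\sigma_y$ by setting $\bar\sigma_y(y) = \sigma(y)$ and $\bar\sigma_y(x) = \bar\sigma(x)$ elsewhere: since $D\smallsetminus\{y\}$ is non-empty, $\bar\sigma_y$ still differs from $\sigma$ on $T\cap S_\sigma$ and agrees with $\sigma$ outside $T$, so it is a legal intermediate for a $(\sigma,T)$-super-switch. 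Let $\sigma''_y$ be the resulting super-switch. These $|D|$ candidates are pairwise distinct and all different from $\sigma'$: on $T$ each $\sigma''_y$ plays as $\bar\sigma_y$, so at $y$ it plays $\sigma(y)$ while $\sigma'$ and every other $\sigma''_{y'}$ (for $y' \neq y$ in $D$) play $\bar\sigma(y)$.

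The lower bound $v_\sigma < v_{\sigma''_y}$ is immediate from the preceding lemma that every super-switch strictly improves $\sigma$. For the upper bound $v_{\sigma''_y} \leq v_{\sigma'}$ I plan a two-step comparison. Introduce the auxiliary strategy $\sigma_y^{\star}$ that agrees with $\sigma''_y$ on $V_{\M}\smallsetminus T$ and plays $\bar\sigma$ on $T$; it is a legal strategy of $G_{|T[\bar\sigma]}$, so optimality of $\sigma'$ in that game yields $v_{\sigma_y^{\star}} \leq v_{\sigma'}$ pointwise. It then remains to compare $\sigma_y^{\star}$ and $\sigma''_y$, which disagree only at $y$ where $\sigma_y^{\star}(y) = \bar\sigma(y)$ while $\sigma''_y(y) = \sigma(y)$. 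Showing that $\sigma_y^{\star}$ is a $\sigma''_y$-switch would, by Proposition~\ref{prop:switch}, give $v_{\sigma_y^{\star}} > v_{\sigma''_y}$ and close the chain $v_{\sigma''_y} < v_{\sigma_y^{\star}} \leq v_{\sigma'}$.

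The delicate step is therefore to establish $y \in S_{\sigma''_y}$ together with $\bar\sigma(y) \in IS_{\sigma''_y}(y)$, that is, to transfer the strict improvement $v_\sigma(\bar\sigma(y)) > v_\sigma(\sigma(y))$ (which holds because $\bar\sigma(y) \in IS_\sigma(y)$) to the value vector $v_{\sigma''_y}$. This is the main obstacle, since the intermediate $\bar\sigma_y$ has moved the value vector of $\sigma''_y$ away from $v_\sigma$ at every vertex of $D\smallsetminus\{y\}$. My plan is to argue by contradiction using Proposition~\ref{prop:set_inclusion} and Lemma~\ref{lem:opt_switch}: failure of the improvement at $y$ would allow one to push $\sigma''_y$ into a super-switch whose switched set strictly contains $D_{\sigma,\sigma''_y} = D\smallsetminus\{y\}$, contradicting the optimality of $\sigma''_y$ in $G_{|T[\bar\sigma_y]}$ granted by Lemma~\ref{lem:opt_switch}. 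Together with the distinctness established above, this will provide the $|D| \geq |D|-1$ super-switches required.
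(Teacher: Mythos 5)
Your plan hinges on the drop-one family: for each $y\in D_{\sigma,\sigma'}$ you un-switch $y$ and claim all $|D_{\sigma,\sigma'}|$ resulting super-switches satisfy $v_{\sigma''_y}\leq v_{\sigma'}$. This is false, and the step you yourself flag as ``delicate'' is exactly where it breaks. Note first that you would be proving $|D_{\sigma,\sigma'}|$ dominated super-switches where the theorem only claims $|D_{\sigma,\sigma'}|-1$; the weaker count is not an artifact of the paper's proof. Concretely, take a binary game with $T=V_{\M}=\{x,y\}$, where $x$ has out-neighbours $y$ (current choice of $\sigma$) and a sink of value $0.4$, and $y$ has out-neighbours a sink of value $0.3$ (current choice) and a sink of value $0.9$. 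Then $v_\sigma(x)=v_\sigma(y)=0.3$, both vertices are switchable, $D=\{x,y\}$, and the total switch $\sigma'$ has $v_{\sigma'}(x)=0.4$, $v_{\sigma'}(y)=0.9$. But your candidate $\sigma''_x$ (un-switch $x$, i.e.\ switch only $y$) has value $0.9$ at both vertices, so $v_{\sigma''_x}>v_{\sigma'}$ rather than $\leq$. The reason is that the other switches may raise the value of $\sigma(y)$'s target above that of $\bar\sigma(y)$'s target, so $y\notin S_{\sigma''_y}$ and $\sigma_y^{\star}$ is not a $\sigma''_y$-switch; and no contradiction with the optimality of $\sigma''_y$ in $G_{|T[\bar\sigma_y]}$ is available, since that optimality only constrains strategies agreeing with $\bar\sigma_y$ on $T$ and says nothing about strategies that re-switch $y$. (The paper's own treatment of the case $|D|=2$ makes this explicit: when $S'_{\sigma'}=\{x\}$ it concludes $v_{\sigma_y}>v_{\sigma'}$, i.e.\ one of the two drop-one candidates strictly beats $\sigma'$.)

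The paper's proof therefore uses a different family of witnesses. It restricts the game so that each $x_i\in D$ keeps only its two arcs towards $\sigma(x_i)$ and $\sigma'(x_i)$, orders $D=\{x_1,\dots,x_t\}$ so that $v_{\{1\}}$ is not strictly greater than any other single-vertex super-switch value, and exhibits the nested prefix super-switches $\sigma_{\{1\}},\sigma_{\{1,2\}},\dots,\sigma_{\{1,\dots,t-1\}}$ as the $|D|-1$ dominated ones, the engine being Proposition~\ref{prop:set_inclusion}: in the restricted game the switch set of the full switch is a \emph{proper} subset of $D$, which confirms at least one coordinate and lets an induction (with some care for ties in value) run down the chain $v_{\{1\}}\leq v_{\{1,2\}}\leq\cdots\leq v_{\sigma'}$. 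To repair your argument you would need to produce, for each $k<|D|$, one dominated super-switch whose switched set has size $k$, rather than attempting to dominate every super-switch of size $|D|-1$.
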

        
        \begin{proof}
            First, we show that Theorem~\ref{thm:tripathi_super_switch} is true in the case $D_{\sigma,\sigma'} = 2$. Let $D_{\sigma,\sigma'} = \{x;y\}$. We consider the game $G'$ which is the game $G_{|T \smallsetminus \{x;y\}[\sigma]}$ where all edges $(x,x')$ and $(y,y')$ with $x' \notin \{\sigma(x);\sigma'(x)\}$ and $y' \notin \{\sigma(y),\sigma'(y)\}$ has been removed. Both $\sigma$ and $\sigma'$ are strategies in the game $G'$. We write $S_{\tilde{\sigma}}$ the switch set of a strategy $\tilde{\sigma}$ in the game $G$ and $S'_{\tilde{\sigma}}$ the switch set of a strategy $\tilde{\sigma}$ in the game $G'$. Hence, $S'_{\sigma} = \{x;y\}$. Let call $\sigma_{x}$ and $\sigma_{y}$ the $(\sigma,T)$-super-switch where respectively only $x$ and only $y$ has been switched in $T$.
            
            By Lemma~\ref{prop:set_inclusion}, $S'_{\sigma'}$ is strictly included in $\{x;y\}$ and thus is at most a singleton. If $S'_{\sigma'}$ is empty then,  $v_{\sigma} < v_{\sigma_x}, v_{\sigma_y} \leq v_{\sigma'}$. We suppose that $S'_{\sigma'} = \{x\}$. Then, we notice that $\sigma_x$ and $\sigma'$ are both strategies of $G'_{|\{x\}[\sigma']}$ and $\sigma'$ is optimal in $G'_{|\{x\}[\sigma']}$. Thus, $v_{\sigma} < v_{\sigma_x} \leq v_{\sigma'}$ and $v_{\sigma_y} > v_{\sigma'}$. This implies that if $v_{\sigma_x} \ngtr v_{\sigma_y}$, $v_{\sigma'} \geq v_{\sigma_x}$ and if we also have $v_{\sigma_x} \neq v_{\sigma_y}$ then $y \in S'_{\sigma_x}$
            
            Now we look at the general case. Let $D_{\sigma,\sigma'} =\{x_,\ldots,x_t\}$. For $E \subset \{1,\ldots,t\}$, we write $v_{E}$ the value of the $(\sigma,T)$-super-switch $\sigma_{\{E\}}$ where only the vertices $x_i$ for $i \in E$ has been switched. We assume that for all $i < j \leq t$, $v_{\{i\}} \ngtr v_{\{j\}}$. If for all $j > 1$, $v_{\{1\}} \neq v_{\{j\}}$ then $v_{\{1;j\}} \geq v_{\{1\}}$ and $j \in S_\{\sigma_{\{1\}}\}$. 
            
            Otherwise, we suppose that there is $k > 1$ such that for all $2 \leq j \leq k$, $v_{\{1\}} = v_{\{j\}}$ and for all $j>k$, $v_{\{1\}} \neq v_{\{j\}}$. We consider the game $G'$ which is the game $G_{|T \smallsetminus \{x_1;\ldots;x_k\}[\sigma]}$ and where for $i \leq k$, all edges from $x_i$ not towards $\sigma(x_i)$ or $\sigma'(x_i)$ are removed. We note $S'$ the switch sets in $G'$ and $\sigma' = \sigma_{\{1;\ldots;k\}}$. By Lemma~\ref{prop:set_inclusion}, $S'_{\sigma'}$ is strictly included in $\{1;\ldots;k\}$. We suppose that $S'_{\sigma'} = \{1;\ldots;k'\}$ for some $k' < k$. By induction hypothesis, we know that $v_{\{1\}} \leq v_{\{1;\ldots;k'\}}$ and $\sigma'$ and $\sigma_{\{1;\ldots;k'\}}$ are both optimal strategies of $G'_{|\{1;\ldots;k'\}[\sigma']}$ and thus have same value. Thus, $v_{E} \leq v_{\sigma'}$ for all $E \subseteq \{1;\ldots;k\}$ and for all $j > k$, $x_j \in S_{\sigma'}$. We conclude by induction on the $\{x_j;\ldots;x_t\}$.
            
            Thus, we have that $v_{\sigma'} = v_{\{1,2,\ldots,t\}} \geq v_{\{1,2,\ldots,t-1\}} \geq \ldots \geq v_{\{1\}}$ and we have proven that there is at least $|D_{\sigma,\sigma'}|-1$ $(\sigma,T)$-super-switchs $\sigma'' \neq \sigma'$ such that $v_{\sigma} < v_{\sigma''} \leq v_{\sigma'}$.
        \end{proof}
        
        We notice that a $\sigma$-switch is a $(\sigma,V_{\M})$-super-switch. Hence, Theorem~\ref{thm:tripathi_super_switch} also proves the main theorem of~\cite{tripathi2011strategy}.

\section{A recursive algorithm with a pair of fixed vertices}\label{sec:rec2}
    
        In all this section, we only consider SSGs of degree $d$.
    
        Algorithm~\ref{alg:RP} works by fixing the strategy of the game on two vertices, then recursively solving the rest of the game. If this does not yield an optimal strategy, then it switches the strategy on the fixed vertices and iterate. The switch sets of the considered strategies after the recursive call (line $9$)  are included in of $\{x,y\}$.

        \begin{algorithm}
        	\caption{RecursivePair\label{alg:RP}}
        	\DontPrintSemicolon
        	\KwData{$G$ an SSG}
        	\KwResult{$\sigma$ an optimal \M\ strategy and $v$ the optimal value vector.}
        	\Begin{
        	    \If{$|V_{\M}| \leq 1$}{
        	        Compute the optimal strategy $\sigma$ by testing all possibilities\;
        	        \KwRet{$(\sigma,v_{\sigma})$}
        	    }
        	    $\sigma \longleftarrow $ a \M\ strategy\;
        	    $x,\;y \longleftarrow $ two vertices of $V_{\M}$\;
        	    $(\sigma,v) \longleftarrow $ RecursivePair($G_{|\{x,y\}[\sigma]}$)\;
        	    \While{$\sigma$ is not optimal}{
        	        $\sigma \longleftarrow \bar{\sigma}$\;
        	        $(\sigma,v) \longleftarrow $ RecursivePair($G_{|\{x,y\}[\sigma]}$)\;
        	    }
        	    \KwRet{$(\sigma,v)$}
        	}
        \end{algorithm}
        
        Let us first recall that computing $v_{\sigma}$ can be done in polynomial time in $|G|$ by solving a linear programming problem.
        
        We write $N$ the number of iterations of the loop. Let $\sigma_i$ be the value of $\sigma$ at the start of the $i$-th iteration of the loop line~8 and $\sigma_{N+1}$ the value of $\sigma$ after the last iteration of the loop. Algorithm~\ref{alg:RP} makes $N+1$ recursive calls to an instance with $n-2$ \M\ vertices. We notice that for all $i$, $S_{\sigma_i} \subseteq \{x;y\}$.
        
        \begin{lemma}\label{lem:d1}
            For all $i<N+1$, $|S_{\sigma_i}| \neq 0$. Moreover, tere is at most $2(d-1)$ indices $k$ such that $|S_{\sigma_k}|=1$. 
        \end{lemma}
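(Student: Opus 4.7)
The first assertion is immediate: by Corollary~\ref{coro:empty_set_optimal}, a \M\ strategy $\sigma$ is optimal iff $S_\sigma = \emptyset$, and by the loop condition at line~8 each $\sigma_i$ with $i<N+1$ is non-optimal, so $|S_{\sigma_i}|\neq 0$. Moreover, every $\sigma_k$ is the output of a recursive call solving $G_{|\{x,y\}[\sigma_k]}$, so every \M\ vertex outside $\{x,y\}$ satisfies the local optimality condition of Proposition~\ref{prop:local_optimality}; hence $S_{\sigma_k}\subseteq\{x,y\}$ and $|S_{\sigma_i}|\in\{1,2\}$ for $i\leq N$.

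For the second assertion, I first establish two structural facts. (a) The sequence $v_{\sigma_k}$ is strictly increasing: since $S_{\sigma_k}\neq\emptyset$, $\bar{\sigma_k}$ is a genuine $\sigma_k$-switch so $v_{\bar{\sigma_k}}>v_{\sigma_k}$ by Proposition~\ref{prop:switch}, and $\sigma_{k+1}$ is an optimal completion of $\bar{\sigma_k}$ in $G_{|\{x,y\}[\bar{\sigma_k}]}$, yielding $v_{\sigma_{k+1}}\geq v_{\bar{\sigma_k}}$. (b) Since each $\sigma_k$ is itself an optimal strategy of $G_{|\{x,y\}[\sigma_k(x),\sigma_k(y)]}$, the value vector $v_{\sigma_k}$ depends only on the pair $p_k:=(\sigma_k(x),\sigma_k(y))$; combined with (a), the pairs $p_1,\ldots,p_{N+1}$ are pairwise distinct.

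Now, for each out-neighbour $b$ of $y$, let $A^*(b)$ denote the set of $x$-choices that are optimal in $G_{|\{y\}[b]}$, and symmetrically define $B^*(a)$ for each out-neighbour $a$ of $x$; write $\sigma^{(a,b)}$ for an optimal strategy of $G_{|\{x,y\}[a,b]}$ (its value vector is unique). The key observation is the equivalence $x\notin S_{\sigma_k}\Longleftrightarrow \sigma_k(x)\in A^*(\sigma_k(y))$ (and the symmetric statement for $y$): indeed, the value vector of $\sigma_k$ is the same in $G$ and in $G_{|\{y\}[\sigma_k(y)]}$, and since $\sigma_k$ is already optimal on $V_{\M}\smallsetminus\{x,y\}$, Corollary~\ref{coro:empty_set_optimal} applied to this reduced game shows that $x\notin S_{\sigma_k}$ is equivalent to $\sigma_k$ being optimal in $G_{|\{y\}[\sigma_k(y)]}$, i.e., $\sigma_k(x)\in A^*(\sigma_k(y))$. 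Iterations with $|S_{\sigma_k}|=1$ thus split into two disjoint families: \emph{$y$-type}, where $\sigma_k(x)\in A^*(\sigma_k(y))$ and $\sigma_k(y)\notin B^*(\sigma_k(x))$, and \emph{$x$-type}, where the roles are swapped.

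To conclude, I bound each family by $d-1$. For the $y$-type: for every fixed $b$, any two $a_1,a_2\in A^*(b)$ yield strategies $\sigma^{(a_1,b)},\sigma^{(a_2,b)}$ both optimal in $G_{|\{y\}[b]}$, hence $v_{\sigma^{(a_1,b)}}=v_{\sigma^{(a_2,b)}}$; by (b), at most one visited pair can therefore have $y$-coordinate $b$ and $x$-coordinate in $A^*(b)$. Furthermore, if $b=\sigma_{N+1}(y)$ then every $a\in A^*(b)$ makes $\sigma^{(a,b)}$ optimal in $G$ with value $v_{\sigma_{N+1}}$, a value attained only at $k=N+1$; so no $y$-type iteration with this $b$ occurs during the loop. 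Since $y$ has at most $d$ out-neighbours, at most $d-1$ iterations are of $y$-type, and symmetrically at most $d-1$ are of $x$-type, yielding the announced bound $2(d-1)$. The main technical point is the equivalence $x\in S_{\sigma_k}\Longleftrightarrow \sigma_k(x)\notin A^*(\sigma_k(y))$, whose validity rests on $\sigma_k$ being already optimal on $V_{\M}\smallsetminus\{x,y\}$.
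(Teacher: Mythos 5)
Your proof is correct and follows essentially the same route as the paper: an iteration with $|S_{\sigma_k}|=1$ is optimal in the subgame obtained by fixing the single switched vertex, so its value is determined by the choice made at that vertex; strict monotonicity of the visited values then allows at most one such iteration per choice, and the choice taken by the final optimal strategy contributes none, giving $(d-1)+(d-1)$. The only difference is cosmetic: your equivalence through $A^*(b)$ and $B^*(a)$ re-derives what the paper obtains directly from Lemma~\ref{lem:opt_switch} (a strategy is optimal in $G_{|S_{\sigma}[\sigma]}$).
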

        
        \begin{proof}
            If $S_{\sigma_i} = \emptyset$, then the algorithm stops and $i=N+1$. Thus, for all $k < N+1$, $S_{\sigma_k} \neq \emptyset$.
            
            For all neighbours $x'$ of $x$ there is at most one $k$ such that $\sigma_k(x) = x'$ and $S_{\sigma_k}= \{x\}$ since such strategies are all optimal in $G_{|\{x\}[\sigma_k]}$ and thus have the same value. Moreover, if there is an optimal strategy $\sigma^{*}$ such that $\sigma^*(x) = x'$, then all optimal strategies of $G_{|\{x\}[\sigma^{*}]}$ are optimal on $G$ and there is no strategy $\sigma$ such that $\sigma(x) = x'$ and $S_{\sigma}=\{x\}$. Hence, there is at most $2(d-1)$ visited strategies with switch set of size $1$.
        \end{proof}
        
        \begin{proposition}
            Algorithm~\ref{alg:RP} runs in $O\left(\left(\left\lfloor\frac{(d+1)^2}{2}\right\rfloor-1\right)^{n/2}Poly(|G|)\right)$.
        \end{proposition}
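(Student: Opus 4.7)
The plan is to bound the number $N+1$ of recursive calls made by Algorithm~\ref{alg:RP} at the top level by $\lfloor(d+1)^2/2\rfloor - 1$. Each call triggers a recursion on an instance with $n-2$ \M-vertices, and the work done outside the recursion (computing $v_\sigma$ via linear programming, the switch set, and the total switch) is polynomial in $|G|$. This yields the recurrence $T(n)\leq \bigl(\lfloor(d+1)^2/2\rfloor-1\bigr)\,T(n-2)+\mathrm{poly}(|G|)$ with base case at $n\leq 1$, which unrolls to the announced complexity.

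Let $\mathcal{S}_{x,y}$ denote the set of \M-strategies $\sigma$ whose switch set is contained in $\{x,y\}$; by Corollary~\ref{coro:empty_set_optimal} applied in $G_{|\{x,y\}[\sigma]}$, these are exactly the strategies that are optimal on their associated fixed subgame. Such a strategy is determined by the pair $(\sigma(x),\sigma(y))$, so $|\mathcal{S}_{x,y}|\leq d^2$, and every strategy returned by the recursive call on line~9 lies in $\mathcal{S}_{x,y}$. Write $n_1$ (resp.\ $n_2$) for the number of visited non-optimal $\sigma_i$ with $|S_{\sigma_i}|=1$ (resp.\ $|S_{\sigma_i}|=2$), so that $n_1+n_2=N$; by Lemma~\ref{lem:d1}, $n_1\leq 2(d-1)$.

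The key ingredient is to apply Theorem~\ref{thm:tripathi_super_switch} at each index $i$ with $|S_{\sigma_i}|=2$. The intermediate switch used by the algorithm is $\bar{\sigma_i}$, which differs from $\sigma_i$ on both $x$ and $y$, so $|D_{\sigma_i,\sigma_{i+1}}|=2$ and the theorem furnishes a $(\sigma_i,\{x,y\})$-super-switch $\sigma''_i\neq\sigma_{i+1}$ with $v_{\sigma_i}<v_{\sigma''_i}\leq v_{\sigma_{i+1}}$. Every such $\sigma''_i$ belongs to $\mathcal{S}_{x,y}$. Strict monotonicity of the value sequence then forces these $\sigma''_i$ to be pairwise distinct and distinct from every visited $\sigma_j$: for $j\leq i$, $v_{\sigma_j}\leq v_{\sigma_i}<v_{\sigma''_i}$; for $j\geq i+2$, $v_{\sigma''_i}\leq v_{\sigma_{i+1}}<v_{\sigma_j}$; and $\sigma''_i\neq\sigma_{i+1}$ by construction.

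Counting visited and skipped strategies inside $\mathcal{S}_{x,y}$ yields $(N+1)+n_2\leq d^2$, i.e.\ $n_1+2n_2+1\leq d^2$. Maximizing $N+1=n_1+n_2+1$ subject to this inequality and $n_1\leq 2(d-1)$ is a short integer optimization whose maximum is attained at $n_1=2(d-1),\ n_2=\lfloor(d-1)^2/2\rfloor$, giving $N+1\leq 2(d-1)+\lfloor(d-1)^2/2\rfloor+1=\lfloor(d+1)^2/2\rfloor-1$, where the last equality follows from $(d+1)^2-(d-1)^2=4d$. The subtle point I expect is the distinctness of the skipped super-switches: this is precisely what strict increase of $(v_{\sigma_j})$ together with the bound $v_{\sigma''_i}\leq v_{\sigma_{i+1}}$ provides, and it is what converts the trivial $|\mathcal{S}_{x,y}|\leq d^2$ bound into a factor-two improvement.
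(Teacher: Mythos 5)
Your proposal is correct and follows essentially the same route as the paper's proof: bound the number of top-level iterations by combining Lemma~\ref{lem:d1} (at most $2(d-1)$ visited strategies with singleton switch set) with Theorem~\ref{thm:tripathi_super_switch} (each visited strategy with switch set $\{x,y\}$ forces one additional "skipped" super-switch), and charge everything against the at most $d^2$ choices of $(\sigma(x),\sigma(y))$; your explicit integer optimization and the paper's "add the two inequalities and halve" yield the same bound. One small imprecision: a strategy with switch set contained in $\{x,y\}$ is \emph{not} determined by the pair $(\sigma(x),\sigma(y))$ — only its value vector is, since the subgame $G_{|\{x,y\}[\sigma]}$ may have several optimal strategies — so the count against $d^2$ should be carried out on pairs (equivalently, value vectors) rather than on strategies; your value-monotonicity argument already separates all pairs except possibly $\sigma''_i$ from $\sigma_{i+1}$, and those two differ in their pair because the skipped super-switch produced in the proof of Theorem~\ref{thm:tripathi_super_switch} switches exactly one of $x,y$ while $\sigma_{i+1}$ switches both.
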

        
        \begin{proof}
            If we write, $n_0$, $n_1$ and $n_2$ the number of indices $i$ such that $\sigma_i$ has a switch set of respectively size $0$, $1$ and $2$, then $N+1 = n_0 + n_1 + n_2$. By Theorem~\ref{thm:tripathi_super_switch}, if $S_{\sigma_i}=\{x;y\}$, then there is a super-switch $\sigma'$ such that $v_{\sigma_i} < v_{\sigma'} \leq v_{\sigma_{i+1}}$. Thus, $n_0 + n_1 + 2n_2 \leq d^2$. We also know by Lemma~\ref{lem:d1} that $n_0 + n_1 \leq 2d-1$. Then:
            \[2(N+1) = (n_0 + n_1) + (n_0+n_1+2n_2) \leq d^2 + 2d - 1 \]
            Which gives:
            \[N+1 \leq \frac{(d+1)^{2}}{2} - 1\]
            Since $N+1$ is an integer, we have $N+1 \leq \left\lfloor\frac{(d+1)^2}{2}-1\right\rfloor$. Hence, Algorithm~\ref{alg:RP} makes at most $\left\lfloor\frac{(d+1)^2}{2}-1\right\rfloor$ recursive calls to an instance with $n-2$ \M\ vertices and Algorithm~\ref{alg:RP} runs in $O\left(\left(\left\lfloor\frac{(d+1)^2}{2}\right\rfloor-1\right)^{n/2}Poly(|G|)\right)$.
        \end{proof}

        In the case of binary SSG, Algorithm~\ref{alg:RP} is similar to Ludwig's Algorithm~\cite{ludwig1995subexponential} which fixes the strategy on the vertices one at a time. The choice of which vertex to fix is random and provides an algorithm that runs in expected time $2^{O\left(\sqrt{n}\right)}Poly\left(|V|\right)$. However, despite the proximity of the two algorithms, we were yet not able to find a similar analysis as the one in~\cite{ludwig1995subexponential} to the stochastic version of Algorithm~\ref{alg:RP}.
        
        On binary SSG, Algorithm~\ref{alg:RP} gives a complexity bound in $O\left(\sqrt{3}^nPoly(|G|)\right)$ which is better that the currently known one for binary SSG in~\cite{tripathi2011strategy}. However, it is still possible to improve this complexity, as we show in the next section.
        
\section{A Recursive Algorithm for Binary SSGs}\label{sec:recT}
    
        In all this section, we will only consider binary SSG.
    
        The concept of Algorithm~\ref{alg:RP} is to fix a subset of vertices and recursively solve the rest of the game. Then, we switch the current strategy and fix a smaller subset of vertices and reiterate. We show that we never make a call to an instance with $n-1$ \M\ vertices. This is done by carefully selecting the set of fixed vertices.
        
        \begin{algorithm}
        	\caption{DecreasingFixedSet\label{alg:RT}}
        	\DontPrintSemicolon
        	\KwData{$G$ an SSG}
        	\KwResult{$\sigma$ an optimal \M\ strategy and $v$ the optimal value vector.}
        	\Begin{
        	    $\sigma \longleftarrow $ a \M\ strategy\;
        	    $S \longleftarrow S_{\sigma}$\;
        	    $\sigma \longleftarrow \bar{\sigma}$\;
        	    $v \longleftarrow v_{\sigma}$\;
        	    $T \longleftarrow S \cup S_{\sigma}$\;
        	    $S \longleftarrow S_{\sigma}$\;
        	    \While{$S \neq \emptyset$}{
        	        $\sigma \longleftarrow \bar{\sigma}$\;
        	        $(\sigma,v) \longleftarrow $ DecreasingFixedSet($G_{|T[\sigma]}$)\;
        	        $T \longleftarrow S \cup S_{\sigma}$\;
        	        $S \longleftarrow S_{\sigma}$\;
        	    }
        	    \KwRet{$(\sigma,v)$}
        	}
        \end{algorithm}
        
        As stated before, the goal of Algorithm~\ref{alg:RT} is to avoid the call to a game with $n-1$ \M\ vertices. In order to achieve this, the set of vertices that is fixed in the recursive call is the union of the previous and current switch set. Computing $S_{\sigma}$ at line 11 can be done in linear time with the value vector $v$ computed at line 10. 
        
        \begin{lemma}
            Algorithm~\ref{alg:RT} terminates and computes an optimal \M\ strategy and its value vector.
        \end{lemma}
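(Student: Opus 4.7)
The plan is to prove termination and correctness simultaneously by strong induction on $n = |V_{\M}|$, the number of \M\ vertices of the SSG passed to the algorithm. The base case $n = 0$ is immediate: the unique \M\ strategy has an empty switch set, the loop is never entered, and the algorithm returns a vacuously optimal strategy. For the inductive step, I would first verify that every recursive call is on a strictly smaller instance. At the moment the recursive call executes, the value of $T$ was last assigned as the union $S \cup S_\sigma$ during initialisation or at the end of the previous loop iteration; in both situations at least one of these two sets is non-empty (the initial $S_{\sigma_0}$ at the start, or, in later iterations, the value of $S$ written at the end of the previous iteration, which is non-empty because the loop was re-entered). Hence $|T| \geq 1$ and $G_{|T[\sigma]}$ has strictly fewer free \M\ vertices, so by the induction hypothesis the recursive call terminates and returns an optimal strategy of that subgame.

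Next I would show that the value vector $v_\sigma$ strictly increases at each iteration of the outer loop. Let $\sigma$ denote the strategy at the start of an iteration (so $S_\sigma \neq \emptyset$ by the loop condition), let $\sigma' = \bar\sigma$ be the strategy after the total switch, and let $\sigma''$ be the strategy returned by the ensuing recursive call. By Proposition~\ref{prop:switch} applied to the $\sigma$-switch $\bar\sigma$, one has $v_{\sigma'} > v_\sigma$ in $G$. Since $\sigma''$ is optimal in $G_{|T[\sigma']}$ and $\sigma'$ is itself a valid strategy of this subgame, $v_{\sigma''} \geq v_{\sigma'}$ inside the subgame; this inequality transfers to $G$ because the value vectors of strategies that play as $\sigma'$ on $T$ coincide in $G$ and in $G_{|T[\sigma']}$. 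Combining the two comparisons yields $v_{\sigma''} > v_\sigma$ in $G$.

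Because the value vectors strictly increase under the partial order $>$ and $G$ admits only finitely many strategies, the loop terminates after finitely many iterations. Upon exit, $S = S_\sigma = \emptyset$ for the current $\sigma$, so by Corollary~\ref{coro:empty_set_optimal} this $\sigma$ is optimal in $G$ and $v = v_\sigma$ as required. The main obstacle is the bookkeeping: one must carefully track in which game each switch set lives (the outer $G$ versus the subgame passed recursively), justify that $T$ is non-empty at every recursive call so that the induction on $n$ is well-founded, and use the identification of strategies of $G_{|T[\sigma']}$ with strategies of $G$ agreeing with $\sigma'$ on $T$ to translate the ``$\sigma''$ optimal in the subgame'' conclusion into the strict value improvement in $G$.
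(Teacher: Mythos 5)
Your proof is correct and follows essentially the same route as the paper: strictly increasing value vectors (via Proposition~\ref{prop:switch} plus optimality of the recursive call, i.e.\ the super-switch lemma) together with finiteness of the strategy set give termination, and an empty switch set on exit gives optimality by Corollary~\ref{coro:empty_set_optimal}. You are in fact more careful than the paper, which omits the induction on $n$ justifying that each recursive call is on a strictly smaller, well-defined instance.
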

        
        \begin{proof}
            The value vectors of the visited strategies are increasing by Proposition~\ref{prop:switch} and there are a finite number of \M\ strategies, hence Algorithm~\ref{alg:RT} terminates. The algorithm ends when the switch set of a \M\ strategy is empty, hence when the algorithm terminates it computes an optimal \M\ strategy. Alternatively, Algorithm~\ref{alg:RT} is a strategy improvement algorithm, thus by~\cite{auger2021}, it terminates and computes an optimal strategy.
        \end{proof}
        
        Let us call $T_0 = V_{\M}$ and $S_0$ and $\sigma_0$ the value of the variables $S$ and $\sigma$ after line 3. In addition, we call $\sigma_i$, $T_i$ and $S_i$ the value of the variables $\sigma$, $T$ and $S$ at the beginning of the $i$-th iteration of the while loop. Let $N$ be the number of iterations. We call $T_{N+1}$, $S_{N+1}$ and $\sigma_{N+1}$ the value of those variables at the end of the last iteration. By line 11 of Algorithm~\ref{alg:RT} for every $i$, $S_i \subset T_i$. We create a partition of $T_i$ by considering $S_i$ and $S_{i-1} \smallsetminus S_{i}$.
        
        If for all $i \geq 1$, $S_{i-1} \smallsetminus S_{i}$ is not empty, then $|T_i| > |S_i|$ and $S_i$ not empty implies that $|T_i| \geq 2$. Then, all recursive calls to Algorithm~\ref{alg:RT} are made to a subgame with at most $n-2$ \M\ vertices.
        
        \begin{proposition}\label{prop:u_not_empty}
            For every $1 \leq i \leq N$, $S_{i-1} \smallsetminus S_{i}$ is not empty.
        \end{proposition}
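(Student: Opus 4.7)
I argue by contradiction: assume $S_{i-1}\subseteq S_i$ for some $1\le i\le N$. The plan is to derive both $v_{\sigma_i}>v_{\sigma_{i-1}}$ (from the fact that each iteration strictly improves the value) and $v_{\sigma_i}\le v_{\sigma_{i-1}}$ (from the switch-set inclusion, via Proposition~\ref{prop:set_inclusion} and Lemma~\ref{lem:eq_vv}), which is absurd.

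\textbf{Step 1: strict value improvement at each iteration.} I first establish that $v_{\sigma_i}>v_{\sigma_{i-1}}$ for every $1\le i\le N$. For $i=1$ this is immediate: $\sigma_1=\bar\sigma_0$ is a $\sigma_0$-switch, so Proposition~\ref{prop:switch} yields $v_{\sigma_1}>v_{\sigma_0}$. For $i\ge 2$, the total switch $\bar\sigma_{i-1}$ is again a $\sigma_{i-1}$-switch, so $v_{\bar\sigma_{i-1}}>v_{\sigma_{i-1}}$ by Proposition~\ref{prop:switch}. The strategy $\sigma_i$ is produced by the recursive call to Algorithm~\ref{alg:RT} on $G_{|T_{i-1}[\bar\sigma_{i-1}]}$, hence is optimal in that subgame. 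Since $\bar\sigma_{i-1}$ is itself a valid strategy of this subgame and values of such strategies coincide in $G$ and in $G_{|T_{i-1}[\bar\sigma_{i-1}]}$, I obtain $v_{\sigma_i}\ge v_{\bar\sigma_{i-1}}>v_{\sigma_{i-1}}$.

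\textbf{Step 2: closing the contradiction.} Under the hypothesis $S_{i-1}\subseteq S_i$ I split into two subcases. If $S_{i-1}=S_i$, then Lemma~\ref{lem:eq_vv} (which applies since $G$ is binary) gives $v_{\sigma_{i-1}}=v_{\sigma_i}$, contradicting Step~1. If $S_{i-1}\subsetneq S_i$, then Proposition~\ref{prop:set_inclusion} gives $v_{\sigma_{i-1}}>v_{\sigma_i}$, again contradicting Step~1. Either way we reach a contradiction, so $S_{i-1}\setminus S_i\ne\emptyset$. The only minor subtlety is the standard identification of $\sigma_i$, returned by the recursive call, with an \M\ strategy of the full game $G$ that plays as $\bar\sigma_{i-1}$ on $T_{i-1}$; this is exactly the identification stated just before Lemma~\ref{lem:opt_switch}, and ensures that Proposition~\ref{prop:set_inclusion} and Lemma~\ref{lem:eq_vv} apply verbatim to $\sigma_{i-1}$ and $\sigma_i$ as strategies of $G$.
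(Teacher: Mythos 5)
Your proof is correct and follows essentially the same route as the paper: establish $v_{\sigma_i}>v_{\sigma_{i-1}}$ (you re-derive the super-switch improvement lemma by hand, where the paper just cites it) and then rule out $S_{i-1}\subseteq S_i$ via the contrapositive of Proposition~\ref{prop:set_inclusion}. Your explicit split into the equality case (Lemma~\ref{lem:eq_vv}) and the strict-inclusion case is slightly more careful than the paper's one-line invocation, but it is the same argument.
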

        
        \begin{proof}
            For every $1 \leq i \leq N$, by definition $T_i = S_{i-1} \cup S_{i}$. The strategy $\sigma_{i}$ is a $(\sigma_{i-1},T_{i-1})$-super-switch, thus $v_{\sigma_i} > v_{\sigma_{i-1}}$ and $S_{i-1}$ is not a subset or equal to $S_{i}$ according to the contraposition of Proposition~\ref{prop:set_inclusion}.
        \end{proof}
        
        Now, we need to prove that each iteration of the loop strictly decreases the size of $T$.
        
        \begin{proposition}\label{prop:t_decreasing}
            For $1 \leq i < N$, $T_{i+1} \subsetneq T_{i}$.
        \end{proposition}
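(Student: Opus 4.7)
The plan is to split the proof into two steps: first, the straightforward inclusion $T_{i+1}\subseteq T_i$ coming directly from the construction of $\sigma_{i+1}$; second, an argument by contradiction establishing strictness, combining Proposition~\ref{prop:u_not_empty}, Proposition~\ref{prop:set_inclusion} and Lemma~\ref{lem:eq_vv}.

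For the inclusion, the key fact I would establish is $S_{i+1}\subseteq T_i$. By construction (line~10 of Algorithm~\ref{alg:RT}), $\sigma_{i+1}$ is an optimal \M\ strategy of the subgame $G_{|T_i[\bar{\sigma_i}]}$, whose \M\ vertices are exactly $V_{\M}\setminus T_i$. Any $x\in V_{\M}\setminus T_i$ is therefore locally optimal under $\sigma_{i+1}$ in the subgame, and since $\sigma_{i+1}$ plays as $\bar{\sigma_i}$ on $T_i$ (as any legal subgame strategy must), its value vector in $G$ coincides with the one computed in the subgame, so local optimality transfers to $G$ and $x\notin S_{i+1}$. Combined with $S_i\subseteq T_i$, this yields $T_{i+1}=S_i\cup S_{i+1}\subseteq T_i$.

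For strict inclusion I would argue by contradiction. Since $T_i\setminus T_{i+1}=S_{i-1}\setminus(S_i\cup S_{i+1})$, the assumption $T_{i+1}=T_i$ is equivalent to $S_{i-1}\subseteq S_i\cup S_{i+1}$. Proposition~\ref{prop:u_not_empty} forces $S_{i-1}\setminus S_i\neq\emptyset$, so every $w$ in that set must, under the assumption, lie in $S_{i+1}$. Separately, the strict value-vector chain $v_{\sigma_{i-1}}<v_{\sigma_i}<v_{\sigma_{i+1}}$ (each step being a super-switch improvement) together with the contrapositives of Proposition~\ref{prop:set_inclusion} and Lemma~\ref{lem:eq_vv} applied to $(\sigma_{i-1},\sigma_{i+1})$ yields $S_{i-1}\setminus S_{i+1}\neq\emptyset$, and any witness of this must then lie in $S_i$.

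The main obstacle, which I expect to dominate the proof's difficulty, is turning the co-existence of these two facts into a contradiction. Exploiting the binary structure one has the explicit identities $\sigma_i(w)=\bar{\sigma_{i-1}}(w)$ and $\sigma_{i+1}(w)=\bar{\sigma_i}(w)$ on $T_i$; vertices of $S_{i-1}$ then split into two opposite regimes—those in $S_{i-1}\setminus S_i\subseteq S_{i+1}$ where $\sigma_{i+1}$ agrees with $\sigma_i$ yet local optimality fails (so $v_{\sigma_{i+1}}(\sigma_{i-1}(w))>v_{\sigma_{i+1}}(\sigma_i(w))$), and those in $S_{i-1}\cap S_i\setminus S_{i+1}$ where $\sigma_{i+1}$ coincides with $\sigma_{i-1}$ and local optimality holds. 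My plan is to combine these precise configurations with the subgame optimality of $\sigma_{i+1}$ to derive a contradiction, either by assembling an auxiliary \M\ strategy that remains a legal strategy of $G_{|T_i[\bar{\sigma_i}]}$ yet strictly improves upon $\sigma_{i+1}$, or by propagating the opposite local value inequalities into a chain incompatible with the strict increase $v_{\sigma_{i-1}}<v_{\sigma_{i+1}}$. The delicate point—where the binary hypothesis is essential—is ensuring that any candidate strategy assembled from these local switches still agrees with $\bar{\sigma_i}$ on the entirety of $T_i$.
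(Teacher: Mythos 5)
Your setup is right up to the decisive step, but the decisive step is missing. You correctly get $T_{i+1}\subseteq T_i$, you correctly reduce strictness to refuting the hypothesis $S_{i-1}\smallsetminus S_i\subseteq S_{i+1}$, and the two ``regimes'' you identify (vertices of $S_{i-1}\smallsetminus S_i$ where $\sigma_{i+1}=\sigma_i\neq\sigma_{i-1}$, and vertices of $S_{i-1}\cap S_i$ where $\sigma_{i+1}=\sigma_{i-1}$ by the double switch) are exactly the right observations. But you then explicitly defer ``turning the co-existence of these two facts into a contradiction,'' and the two directions you sketch do not close it. In particular, your first route --- building a strategy that is still legal in $G_{|T_i[\bar{\sigma_i}]}$ and beats $\sigma_{i+1}$ there --- cannot work as stated: any strategy of that subgame must agree with $\bar{\sigma_i}$ on all of $T_i\supseteq S_{i-1}\smallsetminus S_i$, so you are not allowed to touch precisely the vertices where the hypothesis gives you leverage. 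Your second route (``propagating opposite local inequalities into a chain'') is too vague to assess. Also, the witness in $S_{i-1}\cap S_i\smallsetminus S_{i+1}$ that you extract via Proposition~\ref{prop:set_inclusion} turns out to be unnecessary.

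The missing idea is to aim the contradiction at the \emph{earlier} subgame, not the current one. Under the hypothesis $S_{i-1}\smallsetminus S_i\subseteq S_{i+1}$, define $\sigma'$ from $\sigma_{i+1}$ by flipping it exactly on $S_{i-1}\smallsetminus S_i$ (and nowhere else). Since each flipped vertex lies in $S_{i+1}$ and the game is binary, $\sigma'$ is a $\sigma_{i+1}$-switch, so $v_{\sigma'}>v_{\sigma_{i+1}}>v_{\sigma_{i-1}}$ by Proposition~\ref{prop:switch}. On the other hand, $\sigma'$ agrees with $\sigma_{i-1}$ on all of $S_{i-1}$: on $S_{i-1}\cap S_i$ because $\sigma_{i+1}$ already does (double switch), and on $S_{i-1}\smallsetminus S_i$ because the flip undoes the single switch performed at step $i-1$. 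Hence $\sigma'$ is a strategy of $G_{|S_{i-1}[\sigma_{i-1}]}$, in which $\sigma_{i-1}$ is optimal by Lemma~\ref{lem:opt_switch} --- contradicting $v_{\sigma'}>v_{\sigma_{i-1}}$. That single construction is the whole of the hard part, and as written your proposal does not contain it.
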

        
        \begin{proof}
            Let $1 \leq i < N$.
            First of all, we notice that $S_i \subseteq T_{i-1}$ since $\sigma_i$ is optimal in $G_{T_{i-1}[\sigma_i]}$, and $T_i = S_{i-1} \cup S_i$. Thus, we have $T_i \subseteq T_{i-1}$.
            
            We recall that strategy $\sigma_{i-1}$ is optimal in the game $G_{S_{i-1}[\sigma_{i-1}]}$. We notice that for every $x \in S_{i-1} \cap S_i$, $\sigma_{i+1}(x) = \sigma_{i-1}(x)$: the strategy of every vertex in $S_{i-1} \cap S_i$ has been changed twice, thus going back to its original value since we consider binary SSG. We recall that $S_{i-1} \smallsetminus S_{i}$ is not empty by Proposition~\ref{prop:u_not_empty} and assume for the sake of contradiction that $S_{i-1} \smallsetminus S_{i} \subset S_{i+1}$. Then we define the strategy $\sigma'$ as follows:
            \[\forall x \in S_{i-1} \smallsetminus S_{i},\; \sigma'(x) \neq \sigma_{i+1}(x)\]
            \[\forall x \notin S_{i-1} \smallsetminus S_{i},\; \sigma'(x) = \sigma_{i+1}(x)\]
            Since $S_{i} \smallsetminus S_{i-1}$ is a subset of $S_{i+1}$, $\sigma'$ is a $\sigma_{i+1}$-switch and by Proposition~\ref{prop:switch} $\sigma' > \sigma_{i+1} > \sigma_{i}$. However, for all $x \in S_{i-1}$, $\sigma'(x) = \sigma_{i-1}(x)$ and $\sigma'$ is a strategy of $G_{S_{i-1}[\sigma_{i-1}]}$ which contradicts the optimality of $\sigma_{i-1}$ on this game. This shows that there exists $x$ in $S_{i-1} \smallsetminus S_{i}$ but not in $S_{i+1}$. In other words, there is $x$ in $S_{i-1}$ and thus in $T_i$ but not in $S_{i} \cup S_{i+1}$ and thus not in $T_{i+1}$. Therefore, we have proven that $T_{i+1} \subsetneq T_{i}$. In order to better visualise this proof a representation of the successive switches is provided Figure~\ref{fig:double_switch}.
        \end{proof}
        
        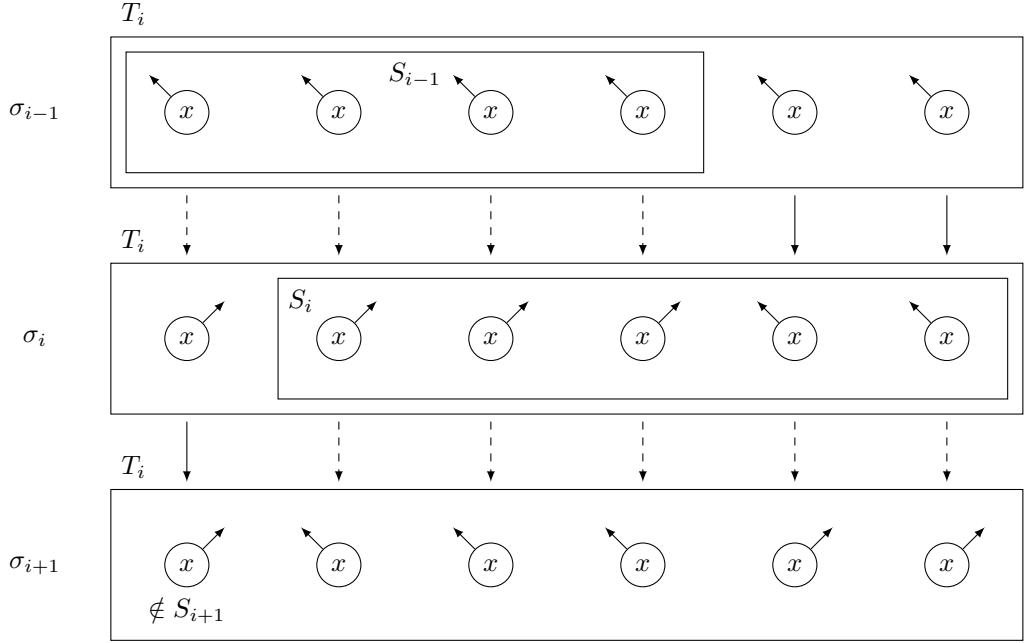
\begin{figure}
\centering

    \begin{tikzpicture}
    
        \draw (0,8) rectangle (12,10);
        \draw (0.2,8.2) rectangle (7.8,9.8);
        
        \node[draw, circle] (x1) at (1,9) {$x$};
        \node[draw, circle] (x2) at (3,9) {$x$};
        \node[draw, circle] (x3) at (5,9) {$x$};
        \node[draw, circle] (x4) at (7,9) {$x$};
        \node[draw, circle] (x5) at (9,9) {$x$};
        \node[draw, circle] (x6) at (11,9) {$x$};
        
        \node (s1) at (-1,9) {$\sigma_{i-1}$};
        \node (S1) at (4,9.5) {$S_{i-1}$};
        
        \node (T1) at (0.3,10.3) {$T_{i}$};
        
        \draw (0,5) rectangle (12,7);
        \draw (2.2,5.2) rectangle (11.8,6.8);
        
        \node[draw, circle] (x1b) at (1,6) {$x$};
        \node[draw, circle] (x2b) at (3,6) {$x$};
        \node[draw, circle] (x3b) at (5,6) {$x$};
        \node[draw, circle] (x4b) at (7,6) {$x$};
        \node[draw, circle] (x5b) at (9,6) {$x$};
        \node[draw, circle] (x6b) at (11,6) {$x$};
        
        \node (s2) at (-1,6) {$\sigma_{i}$};
        \node (S2) at (2.5,6.5) {$S_{i}$};
        \node (T2) at (0.3,7.3) {$T_{i}$};
        
        \draw (0,2) rectangle (12,4);
        
        \node[draw, circle] (x1c) at (1,3) {$x$};
        \node[draw, circle] (x2c) at (3,3) {$x$};
        \node[draw, circle] (x3c) at (5,3) {$x$};
        \node[draw, circle] (x4c) at (7,3) {$x$};
        \node[draw, circle] (x5c) at (9,3) {$x$};
        \node[draw, circle] (x6c) at (11,3) {$x$};
        
        \node (s3) at (-1,3) {$\sigma_{i+1}$};
        \node (nS3) at (1,2.4) {$\notin S_{i+1}$};
        
        \node (T3) at (0.3,4.3) {$T_{i}$};
        
        \draw[->,>=latex,dashed] (1,7.9)--(1,7.1);
        \draw[->,>=latex] (1,4.9)--(1,4.1);
        \draw[->,>=latex,dashed] (3,7.9)--(3,7.1);
        \draw[->,>=latex,dashed] (3,4.9)--(3,4.1);
        \draw[->,>=latex,dashed] (5,7.9)--(5,7.1);
        \draw[->,>=latex,dashed] (5,4.9)--(5,4.1);
        \draw[->,>=latex,dashed] (7,7.9)--(7,7.1);
        \draw[->,>=latex,dashed] (7,4.9)--(7,4.1);
        \draw[->,>=latex] (9,7.9)--(9,7.1);
        \draw[->,>=latex,dashed] (9,4.9)--(9,4.1);
        \draw[->,>=latex] (11,7.9)--(11,7.1);
        \draw[->,>=latex,dashed] (11,4.9)--(11,4.1);
        
        \draw[->,>=latex] (x1)--(0.5,9.5);
        \draw[->,>=latex] (x2)--(2.5,9.5);
        \draw[->,>=latex] (x3)--(4.5,9.5);
        \draw[->,>=latex] (x4)--(6.5,9.5);
        \draw[->,>=latex] (x5)--(8.5,9.5);
        \draw[->,>=latex] (x6)--(10.5,9.5);
        
        \draw[->,>=latex] (x1b)--(1.5,6.5);
        \draw[->,>=latex] (x2b)--(3.5,6.5);
        \draw[->,>=latex] (x3b)--(5.5,6.5);
        \draw[->,>=latex] (x4b)--(7.5,6.5);
        \draw[->,>=latex] (x5b)--(8.5,6.5);
        \draw[->,>=latex] (x6b)--(10.5,6.5);
        
        \draw[->,>=latex] (x1c)--(1.5,3.5);
        \draw[->,>=latex] (x2c)--(2.5,3.5);
        \draw[->,>=latex] (x3c)--(4.5,3.5);
        \draw[->,>=latex] (x4c)--(6.5,3.5);
        \draw[->,>=latex] (x5c)--(9.5,3.5);
        \draw[->,>=latex] (x6c)--(11.5,3.5);

    \end{tikzpicture}

\caption{Strategy on the vertices of $T_i$ under strategies $\sigma_{i-1}$, $\sigma_{i}$ and $\sigma_{i+1}$}
\label{fig:double_switch}
\end{figure}
        
        We can now give a bound on the complexity of Algorithm~\ref{alg:RT}.
        
        \begin{theorem}
            Algorithm~\ref{alg:RT} has time complexity $O\left(\varphi^{n}Poly(|G|)\right)$, where $\varphi = \frac{1+\sqrt{5}}{2}$ is the golden ratio.
        \end{theorem}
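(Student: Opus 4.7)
The plan is to reduce the theorem to bounding the total number of recursive invocations by $\varphi^n$ and then to close a strong induction on $n$ using the defining identity $\varphi^2 = \varphi + 1$. Write $R(n)$ for the maximum number of calls Algorithm~\ref{alg:RT} produces on any SSG with $n$ MAX vertices (counting the top-level call). Each call performs at most $N$ iterations of the while loop, and at iteration $i$ it recurses on $G_{|T_i[\cdot]}$, which has exactly $n - |T_i|$ MAX vertices; every non-recursive operation inside a call (flipping to the total switch, solving a Markov chain to obtain $v_\sigma$, extracting $S_\sigma$ from $v_\sigma$) is polynomial in $|G|$. Consequently the bound $R(n) \leq \varphi^n$ will immediately yield the announced $O(\varphi^n \cdot Poly(|G|))$ complexity, via the recurrence
\begin{equation*}
R(n) \;\leq\; 1 + \sum_{i=1}^{N} R(n - |T_i|).
\end{equation*}

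The structural input I would invoke is that the sequence $|T_1|, \ldots, |T_N|$ consists of $N$ \emph{distinct} integers, all at least $2$. Distinctness is exactly Proposition~\ref{prop:t_decreasing} (the strict chain $T_1 \supsetneq \cdots \supsetneq T_N$). The lower bound $|T_i| \geq 2$ follows because the loop guard at iteration $i$ forces $S_i \neq \emptyset$, while Proposition~\ref{prop:u_not_empty} furnishes an element of $S_{i-1} \smallsetminus S_i$, and $T_i = S_i \cup (S_{i-1} \smallsetminus S_i)$ is a disjoint union. In particular $|T_i| \geq N - i + 2$, and the obvious $|T_1| \leq n$ gives the key slack $N \leq n - 1$.

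Assuming $R(m) \leq \varphi^m$ for all $m < n$, the sum $\sum_{i=1}^N \varphi^{-|T_i|}$ is maximised when the $|T_i|$ take the smallest admissible values $\{2, 3, \ldots, N+1\}$. A geometric series together with $\varphi^2 - \varphi = 1$ gives
\begin{equation*}
\sum_{k=2}^{N+1} \varphi^{-k} \;=\; \frac{\varphi^{-2}\bigl(1 - \varphi^{-N}\bigr)}{1 - \varphi^{-1}} \;=\; \frac{1 - \varphi^{-N}}{\varphi^2 - \varphi} \;=\; 1 - \varphi^{-N},
\end{equation*}
so $R(n) \leq 1 + \varphi^n - \varphi^{n - N}$. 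The slack $N \leq n - 1$ forces $\varphi^{n - N} \geq \varphi > 1$, closing the induction at $R(n) \leq \varphi^n$ with the trivial base case $R(0) = 1$.

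The single subtle point I anticipate is that the full series $\sum_{k \geq 2} \varphi^{-k}$ equals exactly $1$, leaving \emph{a priori} no room to absorb the leading $+1$ of the recurrence. The resolution is precisely the finiteness of $N$: truncating at $k = N + 1$ leaves a deficit $\varphi^{-N}$, and the constraint $|T_1| \leq n$ promotes this into the strictly positive surplus $\varphi^{n - N} - 1 \geq \varphi - 1 > 0$, exactly what is needed to dominate the unit cost of the current node.
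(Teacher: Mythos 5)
Your proof is correct and follows essentially the same route as the paper: both arguments rest on Propositions~\ref{prop:u_not_empty} and~\ref{prop:t_decreasing} to show that the recursive calls made by a single invocation go to instances with pairwise distinct numbers of \M\ vertices, each at most $n-2$, and then close a Fibonacci-type recurrence whose solution is $O(\varphi^n)$. Your explicit geometric-series computation and the use of $N \leq n-1$ to absorb the per-call unit cost is a more careful rendering of what the paper does implicitly through the auxiliary cost function $C(n,k)$ and the telescoping identity $C(n,k)-C(n-1,k)=C(n-2,k)$.
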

        
        \begin{proof}
            We denote by $C(0,k)$, the complexity of solving an SSG with $0$ \M\ vertices and $k$ total vertices. This is the resolution of a one-player game and can be done in polynomial time in the size of the game by solving a linear programming problem. We define $C(n,k)$ as:
            \[C(n,k) = C(n-2,k) + C(n-3,k) + \ldots + C(1,k) + 3C(0,k)\]
            We show by induction that the complexity of solving an SSG using Algorithm~\ref{alg:RT} with $n$ \M\ vertices and $k$ total vertices is bounded by $C(n,k)$. According to Proposition~\ref{prop:t_decreasing}, each call to DecreasingFixedSet is done on an SSG with a decreasing number of \M\ vertices. Proposition~\ref{prop:u_not_empty} also stipulates that if $S_{i}$ is not empty, then $S_{i-1} \smallsetminus S_{i}$ is also not empty and $|T_{i}|$ is greater than one. Thus, each recursive call is made on an instance with at most $n-2$ \M\ vertices. Finally, $v_{\sigma}$ is computed twice before the loop, costing $C(0,k)$ operations. Therefore Algorithm~\ref{alg:RT} has time complexity $O\left(C(n,k)\right)$.
            We notice that $C(n,k) - C(n-1,k) = C(n-2,k)$. Thus, we have:
            \[C(n,k) = O\left(\varphi^{n}Poly(|G|)\right)\]
            
            Thus, we have shown that Algorithm~\ref{alg:RT} has time complexity $O\left(\varphi^{n}Poly(|G|)\right)$.
        \end{proof}
        
        The polynomial factor in all our Algorithms corresponds to the complexity of computing $v_{\sigma}$ from $\sigma$. We recall that this is the complexity of solving a linear programming problem with $|V|$ variables. It is the same polynomial factor as the one in Tripathi, Valkanova and Kumar's algorithm~\cite{tripathi2011strategy} which runs in $O\left(2/n \cdot Poly(|G|)\right)$.
        
        However, the analysis of Algorithm~\ref{alg:RT} does not hold in the case of SSG with higher degree. Algorithm~\ref{alg:RP} can still be improved for some degree by changing the size of the fixed set according to $d$. For instance, if we fix set of size $3$ the complexity of solving SSG of degree $3$ is $O\left(17^{n/3}\right)$ iterations instead of $O\left(7^{n/2}\right)$. For information, $17^{1/3} \simeq 2.57$ and $7^{1/2} \simeq 2.65$. However, increasing the size of the fixed set not always hold better complexity. For binary SSG, the number of iterations with set of size $2$ is $O\left(3^{n/2}\right)$ and $O\left(6^{n/3}\right)$ for set of size $3$ and we know that $6^{1/3} \simeq 1.82$ and $3^{1/2} \simeq 1.73$.

\bibliography{algoseq.bib}

\appendix

\end{document}